  \newfont{\mit}{cmmi10 at 11pt}    
  \DeclareMathAlphabet{\mathsfsl}{OT1}{cmss}{m}{sl}
   \newtheorem{theorem}{Theorem}          \newtheorem{corollary}{Corollary}
   \newtheorem{lemma}{Lemma}     
   \theoremstyle{remark}
   \newtheorem{remark}{Remark} \newtheorem{example}{Example}
\begin{document}
\title{Entropy region and convolution}
\author{Franti\v{s}ek Mat\'{u}\v{s} and L\'{a}szlo Csirmaz}
\thanks{The work of FM was partially supported by Grant Agency of
        the Czech Republic under Grant 13-20012S. The work of LCs
        was partially supported by TAMOP-4.2.2.C-11/1/KONV-2012-0001
        and by the Lendulet Program of the Hungarian Academy of Sciences.
        \\\indent
        The authors would like to express their gratitude to The Chinese
        University of Hong Kong for kind hospitality during First Workshop
        on Entropy and Information Inequalities, April 15-17, 2013,
        and fruitful discussions with participants.}
\address{Franti\v{s}ek Mat\'{u}\v{s} is with Institute of Information Theory and Automation,
            Academy of Sciences of the Czech Republic,
            Pod vod\'{a}renskou v\v{e}\v{z}\'{\i}~4, 182~08 Prague,
            Czech Republic (e-mail: matus@utia.cas.cz).}
\address{L\'{a}szlo Csirmaz is with Central European University,
            N\'{a}dor utca 9, H-1051, Budapest, Hungary,
            University of Debrecen, and Renyi Institute of Mathematics
            (e-mail: csirmaz@renyi.hu).}
\begin{picture}(0,0)
    \put(0,40){\makebox(0,0)[l]{\tiny submitted to
    \emph{Combinatorics, Probability and Computing}, \today}}
\end{picture}\vspace*{-5mm}
\begin{abstract}
    The entropy region is constructed from vectors of random variables by
    collecting Shannon entropies of all subvectors. Its shape is studied here
    by means of polymatroidal constructions, notably by convolution. The
    closure of the region is decomposed into the direct sum of tight and modular
    parts, reducing the study to the tight part. The relative interior of the reduction
    belongs to the entropy region. Behavior of the decomposition under selfadhesivity
    is clarified. Results are specialized to and completed for the region of four
    random variables. This and computer experiments help to visualize approximations
    of a symmetrized part of the entropy region. Four-atom conjecture on the
    minimization of Ingleton score is refuted.
\end{abstract}
\maketitle

%
 \renewcommand{\ge}{\geqslant}    \renewcommand{\le}{\leqslant}
 \renewcommand{\geq}{\geqslant}   \renewcommand{\leq}{\leqslant}
 \newcommand{\pmn}{\emptyset}\newcommand{\pdm}{\subseteq}\newcommand{\sm}{\setminus}
 \newcommand{\vynech}[1]{}
 \newcommand{\te}{\ensuremath{\tau}\xspace}\newcommand{\vte}{\ensuremath{\vartheta}\xspace}
   \def\R{\ensuremath{\mathbb{R}}\xspace}
   \def\F{\ensuremath{\mathbb{F}}\xspace}
   \def\Iij{\ensuremath{\mathbb{I}_{ij}}\xspace}
        \newcommand{\pomn}[1]{\ensuremath{\mathcal{P}(#1)}\xspace}
        \newcommand{\norm}[1]{\ensuremath{|\!|#1|\!|_\infty}\xspace}
        \newcommand{\vare}{\varepsilon}
 \newcommand{\HN}{\ensuremath{\boldsymbol{H}_{\!N}}\xspace}
 \newcommand{\Hf}{\ensuremath{\boldsymbol{H}}\xspace}
 \newcommand{\HNent}{\ensuremath{\boldsymbol{H}^{\scriptstyle\text{\sf ent}}_{\!N}}\xspace}
 \newcommand{\Hfent}{\ensuremath{\boldsymbol{H}^{\scriptstyle\text{\sf ent}}}\xspace}
        \newcommand{\Sec}{\ensuremath{\boldsymbol{S}}\xspace}
        \newcommand{\Lf}[1]{\ensuremath{\boldsymbol{L}_{#1}}\xspace}
        \newcommand{\Ff}[1]{\ensuremath{\boldsymbol{F}_{\!#1}}\xspace}
        \newcommand{\Ef}[1]{\ensuremath{\boldsymbol{E}_{\!#1}}\xspace}
 \newcommand{\HNsa}{\ensuremath{\boldsymbol{H}^{\scriptstyle\text{\sf sa}}_{\!N}}\xspace}
 \newcommand{\HNti}{\ensuremath{\boldsymbol{H}^{\scriptstyle\text{\sf ti}}_{\!N}}\xspace}
 \newcommand{\Hfti}{\ensuremath{\boldsymbol{H}^{\scriptstyle\text{\sf ti}}}\xspace}
 \newcommand{\HNmod}{\ensuremath{\boldsymbol{H}^{\scriptstyle\text{\sf mod}}_{\!N}}\xspace}
 \newcommand{\Hfmod}{\ensuremath{\boldsymbol{H}^{\scriptstyle\text{\sf mod}}}\xspace}
 \newcommand{\clHNent}{\ensuremath{\cl{\HNent}}\xspace}
 \newcommand{\clHNentti}{\ensuremath{\cl{\HNent}^{\scriptstyle\text{\sf ti}}}\xspace}
 \newcommand{\clHfent}{\ensuremath{\cl{\Hfent}}\xspace}
 \newcommand{\clHfentti}{\ensuremath{\cl{\Hfent}^{\scriptstyle\text{\sf ti}}}\xspace}
 \newcommand{\hti}{\ensuremath{h^{\scriptstyle\text{\sf ti}}}\xspace}
 \newcommand{\hmod}{\ensuremath{h^{\scriptstyle\text{\sf m}}}\xspace}
 \newcommand{\fti}{\ensuremath{f^{\scriptstyle\text{\sf ti}}}\xspace}
 \newcommand{\fmod}{\ensuremath{f^{\scriptstyle\text{\sf m}}}\xspace}
 \newcommand{\gti}{\ensuremath{g^{\scriptstyle\text{\sf ti}}}\xspace}
 \newcommand{\gmod}{\ensuremath{g^{\scriptstyle\text{\sf m}}}\xspace}
     \newcommand{\cl}[1]{\ensuremath{\mathsfsl{cl}(#1)}}
     \newcommand{\ri}[1]{\ensuremath{\mathsfsl{ri}(#1)}}
     \newcommand{\pol}[1]{\ensuremath{(#1)^{\circ}}}
    \newcommand{\HIng}{\ensuremath{\Hf^{\scalebox{0.6}{\slb}}}\xspace}
    \newcommand{\HIvij}{\HIng_{\!\scriptscriptstyle(ij)}}
    \newcommand{\HIngvio}[1]{\ensuremath{\HIng_{#1}}\xspace}
%
 \newcommand{\trn}[2]{{\scriptstyle\text{\mit\symbol{1}}}_{#2}%
    \def\temp{#1}\ifx\temp\empty\else{}\,\temp\fi}
 \newcommand\slb{\begin{tikzpicture}[scale=0.13\ht\strutbox]%
        \draw[line width=0.015cm](0,0)--(0.055,0.22)--(0.22,0.22);
        \draw[line width=0.035cm](0.22,0.22)--(0.165,0);
        \draw[line width=0.025cm](0.165,0)--(0,0);
                 \end{tikzpicture}}
\newcommand\stv[2]{\slb_{#2}%
    \def\temp{#1}\ifx\temp\empty\else\,\temp\fi}
    \renewcommand\ast{{*}}
    \newcommand\tsum{\mathop{\textstyle\sum}\nolimits}
    \newcommand{\bal}{\alpha}\newcommand{\bbe}{\beta}
    \newcommand{\bga}{\gamma}\newcommand{\bde}{\delta}

\section{Introduction}\label{S:intro}

 The entropy function of a random vector $(\xi_i)_{i\in N}$ with a finite
 index set~$N$ maps each $I\pdm N$ into the Shan\-non entropy of the sub\-vector
 $(\xi_{i})_{i \in I}$. When the vector takes finite number of values,
 the entropy function can be considered for a point of the Euclidean
 space $\R^{\pomn{N}}$ where $\pomn{N}$ is the power set of~$N$. Such
 points define the \emph{entropic region} \HNent. The closure \clHNent
 of the region is a convex cone \cite[Theorem~1]{ZhY.ineq.cond}
 whose relative interior is contained in \HNent \cite[Theorem~1]{M.twocon}.
 This work studies mostly the shape of \clHNent.

 Basic properties of the Shan\-non entropy imply that any entropy function
 $h$ from \HNent is non-decreasing and submodular, and thus the pair $(N,h)$
 is a polymatroid with the ground set $N$ and rank function $h$ \cite{Fuji}.
 The poly\-matroidal rank functions on~$N$ form the polyhedral cone \HN which,
 consequently, contains \clHNent. A polymatroid, or its rank function, is called
 \emph{entropic (almost entropic)} if the rank function belongs to \HNent
 (\clHNent).

 In this work, the entropy region and its closure are studied by means of
 standard constructions on polymatroids, recalled in Section~\ref{S:defs}. The
 central working tool is the \emph{convolution} of two poly\-matroidal rank
 functions. The crucial property is that \clHNent is closed under convolution
 with modular polymatroids \cite[Theorem~2]{M.twocon}. This has consequences
 on principal extensions and their contractions.

 In Section~\ref{S:dec}, the cone \clHNent is decomposed into the direct sum
 of two cones, see Corollary~\ref{C:tight+mod}. The first one consists of rank
 functions which give the same rank to $N$ and all subsets with one element less.
 We call them \emph{tight}. The second one is the cone of modular polymatroids,
 contained in the entropy region. Hence, the decomposition reduces the study of
 \clHNent to a cone of lesser dimension. It is also closely related to balanced
 information-theoretic inequalities \cite{Chan.bal}. The relative interior of
 the first cone is exhausted by entropic points, see Theorem~\ref{T:ri.ent}
 in Section~\ref{S:ri}.

 Section~\ref{S:self.adh} recalls the notion of selfadhesivity that describes
 amalgamation, or pasting, of copies of a polymatroid. It is the main ingredient
 in the majority of proofs of non-Shannon information-theoretical inequalities.
 The selfadhesivity is compared with the decomposition into tight and modular
 polymatroids. An alternative technique for proving inequalities is briefly
 discussed and related to principal extensions and their contractions.

 Starting from Section \ref{S:four-random} the set $N$ is assumed to have
 exactly four elements. The role of Ingleton inequality in the structure of
 \HN is recalled. The cone \clHNent is reduced to its subcone $\Lf{ij}$, cut
 off by a reversed Ingleton inequality and tightness. Applying polymatroidal
 constructions, it is shown that $\Lf{ij}$ is mapped by two linear maps
 to its face $\Ff{ij}$ of dimension~9, see Theorem~\ref{T:face}.

 Section \ref{S:symmetrized} investigates a symmetrization of $\Ff{ij}$
 whose cross-section $\Sec_{ij}$ has dimension three. Various numerical
 optimization techniques were employed to find an inner approximation
 of $\Sec_{ij}$. An outer approximation is compiled from available non-Shannon
 type information inequalities. Results are visualized, and indicate
 that the two approximations are far from each other. In Section
 \ref{S:4-atom}, the range of Ingleton score studied and related to the
 cross-section $\Sec_{ij}$, see Theorem~\ref{T:Istar}. In Example~\ref{Ex:L},
 a score is presented that refutes Four-atom conjecture \cite{Csir,DFZ.nonSh}.

 The concept of entropy region matters for several mathematical and engineering
 disciplines. The inequalities that hold for the points of the region are
 called information-theoretic, those that do not follow from the
 polymatroid axioms are frequently called non-Shannon. Main breakthroughs
 include finding of the first non-Shannon linear inequality by Zhang-Yeung
 \cite{ZhY.ineq} and the relation to group theory by \cite{Chan.Yeung}.
 The cone \clHNent is not polyhedral \cite{M.infinf} and the structure
 of non-Shannon inequalities seems to be complex \cite{Zh.gen.ineq,MMRV,
 DFZ,DFZ.nonSh,Csir.book,Walsh-Weber}. Reviews are
 in~\cite{Chan.progress,M.twocon} and elsewhere.

 In communications networks, the capacity region of multi-source network coding
 can be expressed in terms of the entropy region, the reader is referred to the
 thorough review of the network coding in~\cite{BMRST}. Non-Shannon inequalities
 have a direct impact on converse theorems for multi-terminal problems
 of information theory, see \cite{Y.course}. In cryptography, the inequalities
 can improve bounds on the information ratios in secret sharing schemes
 \cite{BeimelOrlov,BLP,Csir}.

 In probability theory, the implication problem of conditional independence
 among subvectors of a random vector can be rephrased via the entropy region, see \cite{St}.
 The guessing numbers of games on directed graphs and entropies of the graphs
 can be related to the network coding \cite{Riis.infoflows,Riis.constr's}
 where non-Shannon inequalities provide sharper bounds \cite{Riis.nonSh}.
 Information-theoretic inequalities are under investigation in additive
 combinatorics \cite{MMT}. Last but not least, the information-theoretic
 inequalities are known to be related to Kolmogorov complexity \cite{MMRV},
 determinantal inequalities and group-theoretic inequalities \cite{Chan.progress}.

\section{Preliminaries}\label{S:defs}

 This section recalls basic facts about polymatroids and related operations.
 Auxiliary lemmas are worked out to be used later. Introduction to entropy
 and the entropy region can be found in the textbooks \cite{CsiK,Y.course},
 further material on polymatroids is in \cite{Lov}.

 The letter $N$ always denotes a finite set and $f,g,h$ real functions
 on the power set $\pomn{N}$ of~$N$, or points in the $2^{|N|}$-dimensional
 Euclidean space $\R^{\pomn{N}}$. Singletons and elements of~$N$ are not
 distinguished and the union sign between subsets of~$N$ is often omitted.
 For example, $iJ$ abbreviates $\{i\}\cup J$ where $i\in N$ and $J\pdm N$.

 For $I\pdm N$ let $\delta_I\in\R^{\pomn{N}}$ have all coordinates
 equal to $0$ but $\delta_I(I)=1$. For $I,J\pdm N$ the expression
 $f(I)+f(J)-f(I\cup J)-f(I\cap J)$ is interpreted as the standard
 scalar product of $\trn{}{I,J}=\delta_I+\delta_J-\delta_{I\cup J}
 -\delta_{I\cap J}$ with $f$. An alternative notation for
 $\trn{}{iL,jL}$ is $\trn{}{ij|L}$ where $L\pdm N$ and $i,j\in N\sm L$.

\subsection{}
 The pair $(N,f)$ is a \emph{polymatroid} when $f(\emptyset)=0$, $f$
 is nondecreasing, thus $f(I)\leq f(J)$ for $I\pdm J\pdm N$, and submodular,
 thus $\trn{f}{I,J}\geq0$ for $I,J\pdm N$. Here, $N$ is the ground set,
 $f(N)$ is the rank and $f$ is the rank function of the polymatroid. The polymatroid
 is frequently identified with its rank function. The collection of polymatroidal
 rank functions forms the closed polyhedral cone \HN in the nonnegative orthant
 of $\R^{\pomn{N}}$. Extreme rays of the cone are mostly unknown.
 For a review of polymatroids the reader is referred to~\cite{Lov}.

 The polymatroid is a \emph{matroid} \cite{Oxl} if $f$ takes integer values and
 $f(I)\leq|I|$, $I\pdm N$.  For $J\pdm N$ and $0\leq m\leq|N\sm J|$ integer
 let $r^J_m(I)=\min\{m,|I\sm J|\}$, $I\pdm N$. Thus, $r^J_m$ is a matroidal
 rank function with the set of loops $J$, $r_m^J(J)=0$, and rank $m$. If
 $J=\pmn$ it is sometimes omitted in the superindex.

 The polymatroid $f$ is \emph{modular} if $\trn{f}{I,J}=0$ for any $I$ and $J$
 disjoint. This is equivalent to $f(I)=\sum_{i\in I}\,f(i)$, $I\pdm N$, or to
 the single of this equalities with $I=N$. The modular polymatroids form the
 polyhedral cone \HNmod whose extreme rays are generated by the matroids
 ${r_{\scriptscriptstyle1}}^{\scriptscriptstyle\!\!N\sm i}$, $i\in N$.

 A polymatroid $(N,f)$ is \emph{linear} if there exist subspaces $E_i$, $i\in N$,
 of a linear space over a field $\F$ such that if $I\pdm N$ then $f(I)$ equals
 the dimension of the sum of $E_i$ over $i\in I$. If \F is finite then
 $f\ln|\F|$ is entropic.

\subsection{}
 The \emph{contraction} of a polymatroid $(N,f)$ along $I\pdm N$ is $(N\sm I,h)$
 where $h(J)=f(J\cup I)-f(J)$, $J\pdm N\sm I$. (Poly)matroids are closed to
 contractions. The following lemma is known, e.g.\ implicit in the proof of
 \cite[Lemma 2]{M.back}, but no reference to a proof seems to be
 available.

\begin{lemma}\label{L:contract}
   The almost entropic polymatroids are closed to contractions.
\end{lemma}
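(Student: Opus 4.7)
The plan is to exploit the convex-cone structure of the closed entropic region on $N\sm I$ together with the fact that contraction of an entropic polymatroid is a conditional entropy, which decomposes as a convex combination of entropic vectors. First, I would pick entropic $f_n$ on $N$ with $f_n\to f$, each realised by a finitely-valued random vector $\xi^n=(\xi^n_i)_{i\in N}$, so $f_n(K)=H(\xi^n_K)$ for $K\pdm N$. The contraction $h_n$ of $f_n$ along $I$ then satisfies, for $J\pdm N\sm I$,
\[
   h_n(J)\,=\,H(\xi^n_J\mid\xi^n_I)\,=\,\sum_{x}p^n_x\,H(\xi^n_J\mid\xi^n_I{=}x),
\]
where $x$ ranges over the finitely many values of $\xi^n_I$ with probabilities $p^n_x$. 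For each such $x$, the function $J\mapsto H(\xi^n_J\mid\xi^n_I{=}x)$ is the entropy function of $(\xi^n_j)_{j\in N\sm I}$ under the conditional distribution given $\xi^n_I{=}x$, hence an entropic vector on $N\sm I$.

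The key step is then to conclude that $h_n\in\cl{\Hfent_{N\sm I}}$. Since independent direct products of random vectors realise sums of entropy functions, nonnegative integer combinations of entropic vectors are entropic; because $\cl{\Hfent_{N\sm I}}$ is a convex cone, nonnegative rational combinations lie in the closure; and closedness then yields arbitrary convex combinations via rational approximation of the weights $p^n_x$. In particular, each $h_n$ is almost entropic.

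Finally, contraction along $I$ is a linear, hence continuous, map $\R^{\pomn{N}}\to\R^{\pomn{N\sm I}}$, so $h_n\to h$ where $h$ is the contraction of $f$. Closedness of $\cl{\Hfent_{N\sm I}}$ then gives $h\in\cl{\Hfent_{N\sm I}}$, as required. The only nontrivial point is the second step, the passage from individual entropic vectors to their convex combinations, which relies on the convex-cone structure of the closed entropy region recalled in the introduction; the remainder is a routine continuity and closedness argument.
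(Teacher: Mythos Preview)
Your argument is correct and follows essentially the same route as the paper: both reduce to the entropic case, write the contraction as the mixture $\sum_x p_x\,g_x$ of conditional entropy functions, and invoke convexity of the closed entropy region on $N\sm I$ to conclude. The paper compresses your explicit limiting step into the phrase ``it suffices to show that if $f$ is entropic\ldots'', relying implicitly on the continuity of contraction that you spell out; your handling of the convex-combination step via integer sums and rational approximation is more detailed than needed (one can cite directly that the closure is convex), but not wrong.
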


\begin{proof}
 It suffices to show that if $f$ is equal to the entropy function of a random
 vector $(\xi_i)_{i\in N}$, then the contraction $h$ of $f$ along $I$ is
 almost entropic. If $\xi_i$ takes values in a finite set $X_i$ then
 $\xi_I=(\xi_i)_{i\in I}$ ranges in the product of $X_i$, $i\in I$. For
 every element $x_I$ of the product that is attained with a positive
 probability, let $\eta^{x_I}$ be the random vector $\xi_{N\sm I}$ conditioned
 on the event that $\xi_I=x_I$. The entropy function of $\eta^{x_I}$ is
 denoted by $g_{x_I}$. By an easy calculation, the contraction $h$ equals
 the convex combination of the entropy functions $g_{x_I}$ weighted
 by the probabilities of the events $\xi_I=x_I$. Since \clHNent is convex
 \cite[Theorem~1]{ZhY.ineq.cond}, $h$ is almost entropic.
\end{proof}

\subsection{}
 When $f$ and $g$ are polymatroidal rank functions on the same ground set $N$,
 their \emph{convolution} $f\ast g$ is defined as
 \[
    f\ast g\;(I) =\min_{J\pdm I}\:\big\{\, f(J)+g(I\sm J)\,\big\}\,,\qquad I\pdm N\,.
 \]
 If both $f$ and $g$ are modular, then $f\ast g$ is also modular, assigning the values
 $\min\{ f(i),g(i)\}$ to the singletons $i$ of $N$. By \cite[Theorem~2.5]{Lov},
 $(N, f\ast g)$ is a polymatroid whenever $g$ alone is modular. The following
 simple assertion may help to build intuition for later proofs.

\begin{lemma}\label{L:convolution}
   Let $f,g$ be two polymatroids on $N$ where $g$ is modular, and $i\in N$.
   If $f(j)\leq g(j)$ for all $j\in N\sm i$ then
   \[\begin{split}
      &f\ast g\:(I)=f(I)\,, \\
      &f\ast g\:(iI)=\min\:\big\{\,f(I)+g(i), f(iI)\,\big\}\,,
    \end{split} \qquad I\pdm N\sm i\,.
   \]
   If, additionally, $f(i)\leq g(i)$ then $f\ast g=f$.
\end{lemma}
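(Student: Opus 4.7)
The plan is to unfold the definition of convolution and use modularity of $g$ together with a standard elementwise upper bound for polymatroids. Concretely, for any polymatroid $f$ and $J\pdm I\pdm N$ one has
\[
   f(I)\leq f(J)+\tsum_{j\in I\sm J}f(j)\,,
\]
which I would obtain by ordering $I\sm J=\{j_1,\dots,j_k\}$ and iterating the submodular inequality $f(J\cup j_1\cdots j_t)-f(J\cup j_1\cdots j_{t-1})\leq f(j_t)$ (which itself is $\trn{f}{J\cup j_1\cdots j_{t-1},j_t}\geq0$ plus $f(\pmn)=0$). This little inequality together with the hypothesis $f(j)\leq g(j)$ on $N\sm i$ is what drives the whole argument.

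For the first equation, fix $I\pdm N\sm i$ and any $J\pdm I$. Since $g$ is modular, $g(I\sm J)=\sum_{j\in I\sm J}g(j)$, and since every $j\in I\sm J$ lies in $N\sm i$, the hypothesis gives $g(j)\geq f(j)$. Combining with the elementwise bound,
\[
   f(J)+g(I\sm J)\;\geq\;f(J)+\tsum_{j\in I\sm J}f(j)\;\geq\;f(I)\,,
\]
so the minimum over $J\pdm I$ is at least $f(I)$, and it is attained at $J=I$. Hence $f\ast g\:(I)=f(I)$.

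For the second equation, split the minimization over $J\pdm iI$ into the cases $i\in J$ and $i\notin J$. If $i\in J$, write $J=iJ'$ with $J'\pdm I$ and repeat the previous estimate with $iJ'$ in place of $J$, obtaining $f(iJ')+g(I\sm J')\geq f(iI)$ with equality at $J'=I$. If $i\notin J$, then $J\pdm I$ and modularity yields $g(iI\sm J)=g(i)+g(I\sm J)$, so the first part applied inside gives minimum $f(I)+g(i)$ at $J=I$. Taking the smaller of the two cases yields the claimed $\min\{f(I)+g(i),\,f(iI)\}$.

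The final assertion is a direct corollary: when $f(i)\leq g(i)$ also holds, then for $i\in I$ the submodular/nondecreasing bound gives $f(iI)\leq f(I)+f(i)\leq f(I)+g(i)$, so the minimum in the second equation collapses to $f(iI)$; combined with the first equation this is $f\ast g=f$. I do not anticipate a genuine obstacle here: the whole proof is essentially one application of the elementwise upper bound, and the only thing to keep track of is the bookkeeping when splitting on whether $J$ contains the distinguished element $i$.
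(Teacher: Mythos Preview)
Your proof is correct and follows essentially the same route as the paper's: both arguments use submodularity to obtain the elementwise upper bound $f(I)\leq f(J)+\sum_{j\in I\sm J}f(j)$, then invoke $f(j)\leq g(j)$ on $N\sm i$ together with modularity of $g$, and split the computation of $f\ast g(iI)$ according to whether the minimizing set contains~$i$. The paper inserts the intermediate step $f(I)\leq f(J)+f(I\sm J)$ before expanding $f(I\sm J)$ elementwise, whereas you telescope directly, but this is cosmetic.
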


\begin{proof}
 By submodularity of $f$, for $J\pdm I\pdm N\sm i$
 \[
    f(I)+g(\pmn)\leq f(J)+f(I\sm J) \leq f(J)+\mbox{$\sum$}_{j\in I\sm J} f(j)
                \leq f(J)+g(I\sm J)
 \]
 using that $f(j)\leq g(j)$, $j\in N\sm i$, and modularity of $g$. This proves
 that $f\ast g\:(I)$ equals $f(I)$. Similarly,
 \[
    f(iI)+g(\pmn)\leq f(iJ)+f(I\sm J)\leq f(iJ)+g(I\sm J)
 \]
 and
 \[
   f(I)+g(i)\leq f(J)+f(I\sm J)+g(i)\leq f(J)+g(iI\sm J)\,.
 \]
 Hence, $f\ast g\:(iI)$ is the smaller of the numbers $f(iI)$ and $f(I)+g(i)$.
\end{proof}

 In a notable instance of the convolution, the difference between
 $f\ast g$ and $f$ is at most at a singleton. This will be used in
 Theorem~\ref{T:face} to shift almost entropic points.

\begin{corollary}\label{C:cut}
    Let $(N,f)$ be a polymatroid, $i\in N$, and
    \[
        \max_{j\in N\sm i}\:\big[f(ij)-f(j)\big]\leq t \leq f(i)\,.
    \]
    Let $(N,g)$ be a modular polymatroid with $g(i)=t$ and $f(j)\leq g(j)$, $j\in N\sm i$.
    Then $f\ast g$ is equal to $f$ up to $f\ast g(i)=t$.
\end{corollary}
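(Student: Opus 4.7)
The plan is to apply Lemma~\ref{L:convolution} directly, and then check that the hypothesis $\max_{j\in N\sm i}[f(ij)-f(j)]\leq t$ forces the minimum in the formula for $f\ast g\:(iI)$ to be attained by $f(iI)$ for every nonempty $I\pdm N\sm i$. The only coordinate at which $f\ast g$ can differ from $f$ is then $\{i\}$, where the value is read off from the other case of the lemma.

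First, because $g$ is modular with $g(j)\geq f(j)$ for $j\in N\sm i$, Lemma~\ref{L:convolution} applies and yields $f\ast g\:(I)=f(I)$ for $I\pdm N\sm i$, together with $f\ast g\:(iI)=\min\{f(I)+t,\,f(iI)\}$ for the same $I$'s (using $g(i)=t$). For $I=\pmn$ this gives $f\ast g\:(i)=\min\{t,f(i)\}=t$ since $t\leq f(i)$, as claimed.

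It remains to show $f(iI)\leq f(I)+t$ for every nonempty $I\pdm N\sm i$. Pick any $j\in I$. The submodularity inequality $\trn{f}{ij,I}\geq0$, rewritten as
\[
   f(iI)+f(j)\leq f(ij)+f(I)\,,
\]
combined with $f(ij)-f(j)\leq t$ from the hypothesis, yields $f(iI)\leq f(I)+t$. Hence the minimum in $\min\{f(I)+t,f(iI)\}$ equals $f(iI)$, so $f\ast g\:(iI)=f(iI)$.

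There is no real obstacle here; the only thing to watch is the edge case $I=\pmn$, which is precisely where the convolution is allowed to undercut $f$, and the bound $t\leq f(i)$ ensures this undercut happens exactly at value $t$. Everything else is a one-line submodularity argument.
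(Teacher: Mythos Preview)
Your proof is correct and follows essentially the same approach as the paper's own proof: both invoke Lemma~\ref{L:convolution} to reduce the question to showing $f(iI)\leq f(I)+t$ for nonempty $I\pdm N\sm i$, and both establish this by applying submodularity to the pair $ij,I$ for some $j\in I$ together with the hypothesis $f(ij)-f(j)\leq t$.
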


\begin{proof}
 The assumption $t\leq f(i)$ implies $f\ast g(i)=t$. By Lemma~\ref{L:convolution},
 $f\ast g$ is equal to~$f$ on the subsets of $N\sm i$. Let $I\pdm N\sm i$
 and $I$ contain some~$j$. By submodularity and the lower bound on $t$,
 \[
    f(iI)\leq f(I)+f(ij)- f(j)\leq f(I)+t=f(I)+g(i)\,.
 \]
 It follows by Lemma~\ref{L:convolution} that $f\ast g(iI)=f(iI)$.
\end{proof}

 Since the operation $\ast$ is commutative and associative, the convolution
 of a polymatroid $f$ with a modular polymatroid $g$ can be computed iteratively
 by Lemma~\ref{L:convolution}. It suffices to write $g$ as the multiple
 convolution of modular polymatroids $g_i$, $i\in N$, such that $g_i(i)=g(i)$
 and $g_i(j)=r$, $j\in N\sm i$, where $r$ is larger than the values of
 $f$ and $g$ on all singletons.

\subsection{}
 The last subsection of this section defines a one-element parallel extension
 of a polymatroid. This turns into a principal extension when modified
 by a convolution. Then, the added element is contracted. The polymatroid
 obtained by these three constructions is employed later in Sections~\ref{S:self.adh}
 and \ref{S:four-random}.

 Two points $i,j$ of a polymatroid $(N,f)$ are \emph{parallel} if $f(iJ)=f(jJ)$
 for every $J\pdm N$. Given any $i\in N$, it is always possible to extend
 $f$ to $\R^{\pomn{0\cup N}}$, where $0\not\in N$, such that $i$ and $0$
 are parallel in the extension. More generally, for $L\pdm N$ the
 \emph{extension of $f$ by $0$ parallel to $L$} is the polymatroid
 $(0\cup N,h)$ given by $h(J)=f(J)$ and $h(0\cup J)=f(L\cup J)$ where $J\pdm N$.
 If $f$ is the entropy function of $(\xi_i)_{i\in N}$ then $h$ is entropic
 as well, completing the random vector by the variable $\xi_0=(\xi_i)_{i\in L}$.

 This parallel extension is convolved with the modular polymatroid
 $(0\cup N,g)$ having $g(0)=t\leq f(L)$ and $g(i)\geq f(i)$, $i\in N$,
 to arrive at the \emph{principal extension $f_{L,t}$ of $f$ on the subset $L$
 with the value $t$} \cite{Lov}. By Lemma \ref{L:convolution},
 \[
   f_{L,t}(0\cup I)=\min\big\{f(I)+t,f(L\cup I)\big\}\,,\qquad I\pdm N\,.
 \]
 In turn, the principal extension $f_{L,t}$ is contracted by $0$ to get
 the polymatroid on $N$ with the rank function
 \begin{equation}\label{E:ff}
   f_{L,t}^*(I)=\min\big\{f(I),f(L\cup I)-t\big\}\,,\qquad I\pdm N\,.
 \end{equation}

\begin{lemma}\label{L:aent}
  If $(N,f)$ is almost entropic, $L\pdm N$ and $0\leq t\leq f(L)$
  then so is $(N,f_{L,t}^*)$.
\end{lemma}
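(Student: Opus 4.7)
The plan is to realise $f^*_{L,t}$ as the output of the three elementary operations explained just before the lemma — parallel extension, convolution with a modular polymatroid, and contraction of a single element — and to verify that each of them preserves almost entropy.

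First, the parallel extension step. If $f$ is the entropy function of a vector $(\xi_i)_{i\in N}$, then setting $\xi_0=(\xi_i)_{i\in L}$ realises the extension $h$ on $0\cup N$, characterised by $h(J)=f(J)$ and $h(0\cup J)=f(L\cup J)$ for $J\pdm N$, as an entropy function. For a general almost entropic $f=\lim_n f_n$ with each $f_n$ entropic, the corresponding extensions $h_n$ are entropic, and since the extension depends linearly (hence continuously) on $f$ they converge coordinatewise to $h$; thus $h$ is almost entropic.

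Second, I pass to $f_{L,t}$ by convolving with a suitable modular polymatroid. Let $g$ be modular on $0\cup N$ with $g(0)=t$ and $g(i)$ large enough to dominate $h(i)=f(i)$ for every $i\in N$. By Lemma~\ref{L:convolution} the convolution $h\ast g$ agrees with $h$ on $\pomn{N}$ and satisfies $h\ast g(0\cup I)=\min\{h(I)+t,h(0\cup I)\}=\min\{f(I)+t,f(L\cup I)\}$, so $h\ast g$ is precisely the principal extension $f_{L,t}$. The closure of the entropy region is stable under convolution with modular polymatroids by \cite[Theorem~2]{M.twocon}, so $f_{L,t}$ is almost entropic.

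Finally, $f^*_{L,t}$ is by definition the contraction of $f_{L,t}$ along $\{0\}$, which is almost entropic by Lemma~\ref{L:contract}; this gives the conclusion. The hypothesis $0\leq t\leq f(L)$ enters only lightly: nonnegativity makes $g(0)=t$ admissible, while the upper bound $t\leq f(L)=h(0)$ ensures $f_{L,t}(0)=t$, so the contraction simplifies to the formula~\eqref{E:ff}. I do not anticipate any substantive obstacle — the proof is just the concatenation of three preservation facts, and the only non-routine ingredient is \cite[Theorem~2]{M.twocon}, which is already the principal tool of the paper.
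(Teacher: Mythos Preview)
Your proposal is correct and follows the same route as the paper: realise $f^{*}_{L,t}$ as parallel extension followed by convolution with a modular polymatroid (invoking \cite[Theorem~2]{M.twocon}) and then contraction (invoking Lemma~\ref{L:contract}). The only cosmetic difference is that the paper first treats the entropic case and then passes to the almost entropic one by continuity in a single phrase, whereas you spell out the limiting argument at the parallel-extension step; the substance is identical.
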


\begin{proof}
 If $f$ is entropic then $f_{L,t}^*\in\clHNent$ by \cite[Theorem~2]{M.twocon}
 and Lemma~\ref{L:contract} which implies the assertion.
\end{proof}

 Under some assumptions, it is possible to find all mimina in~\eqref{E:ff}.
 Recall that the $f$-closure $\cl{I}$ of $I\pdm N$ consists of those $i\in N$
 that satisfy $f(iI)=f(I)$.  By monotonicity and submodularity, $f(\cl{I})=f(I)$.

\begin{lemma}\label{L:priext+contr2}
  If $(N,f)$ is a polymatroid, $L\pdm N$ and $0\leq t\leq f(L)$ such that
  \begin{equation}\label{E:aa}
    t\leq\;\; \min_{I\pdm N\,,\: L\not\pdm\cl{I}}\;\;
                        \max_{\ell\in L\sm\cl{I}}\;[\,f(\ell\cup I)-f(I)\,]
 \end{equation}
  then
  \[
    f_{L,t}^*(I)=\begin{cases}
                \;f(I)-t\,, \quad &\text{~~when~}L\pdm\cl{I}\,,\\
                \;f(I)\,,&\text{~otherwise,}
    \end{cases}\qquad I\pdm N\,.
 \]
\end{lemma}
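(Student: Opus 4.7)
The plan is to use the explicit formula \eqref{E:ff}, namely $f_{L,t}^*(I)=\min\{f(I),f(L\cup I)-t\}$, and simply identify, for each $I\pdm N$, which of the two quantities in the minimum is smaller. The argument splits along the same dichotomy appearing in the conclusion of the lemma.

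First I would handle the easy case $L\pdm\cl{I}$. Since $\cl{I}\supseteq I$ (every $i\in I$ trivially satisfies $f(iI)=f(I)$) and $f(\cl{I})=f(I)$ by the remark preceding the lemma, one has $L\cup I\pdm\cl{I}$, and monotonicity gives $f(I)\le f(L\cup I)\le f(\cl{I})=f(I)$. Hence $f(L\cup I)=f(I)$, so \eqref{E:ff} yields $f_{L,t}^*(I)=\min\{f(I),f(I)-t\}=f(I)-t$ because $t\ge 0$. This is the first case of the claimed formula and it does not even use the hypothesis~\eqref{E:aa}.

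Next I would treat the case $L\not\pdm\cl{I}$, where the hypothesis~\eqref{E:aa} enters. It applies to this very $I$ and produces some $\ell\in L\sm\cl{I}$ realizing the maximum, with $f(\ell\cup I)-f(I)\ge t$. Monotonicity then gives
\[
    f(L\cup I)-t\;\ge\;f(\ell\cup I)-t\;\ge\;f(I)\,.
\]
Consequently the minimum in \eqref{E:ff} is attained at $f(I)$, so $f_{L,t}^*(I)=f(I)$, which is the second case of the claim.

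There is no real obstacle here: the entire content of the lemma is the identification of the minimizer in \eqref{E:ff}, and the hypothesis~\eqref{E:aa} is tailored precisely to guarantee $t\le f(L\cup I)-f(I)$ whenever $L\not\pdm\cl{I}$. The only subtlety worth flagging is the use of the closure identity $f(\cl{I})=f(I)$ in the first case, since without it one might worry that the modularly shifted term $f(L\cup I)-t$ falls below $f(I)-t$; monotonicity together with $L\cup I\pdm \cl{I}$ rules that out.
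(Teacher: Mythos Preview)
Your proof is correct and follows essentially the same route as the paper's: split on whether $L\pdm\cl{I}$, use $f(L\cup I)=f(\cl{I})=f(I)$ in the first case, and in the second case extract from \eqref{E:aa} an $\ell\in L\sm\cl{I}$ with $t\le f(\ell\cup I)-f(I)\le f(L\cup I)-f(I)$ by monotonicity. The only difference is cosmetic---you spell out $I\pdm\cl{I}$ and the role of $t\ge0$ a bit more explicitly than the paper does.
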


\begin{proof}
 The assumption $0\leq t\leq f(L)$ is needed to derive \eqref{E:ff}. If
 $L\pdm\cl{I}$ then $f(L\cup I)\leq f(L\cup\cl{I})=f(\cl{I})=f(I)$. Hence,
 the inequality is tight and the minimum in~\eqref{E:ff} equals $f(I)-t$.
 Otherwise, by the assumption \eqref{E:aa}, $t\leq f(\ell\cup I)-f(I)$ for
 some $\ell\in L\sm\cl{I}$. Since $f(L\cup I)-t\geq f(\ell\cup I)-t\geq f(I)$
 the minimum in~\eqref{E:ff} equals~$f(I)$.
\end{proof}

\begin{remark}\label{R:qwe}
 The special instance of Lemma~\ref{L:priext+contr2} is used in~the proof
 of Theorem~\ref{T:face} to shift almost entropic points. There, $L$ equals
 a singleton $k$ contained in $\cl{N\sm k}$. In such a case, \eqref{E:aa} is
 a consequence of
 \begin{equation}\label{E:aaa}
    t\leq \min_{j\in N\sm k}\: [f(N\sm j)-f(N\sm jk)]
 \end{equation}
 because each maximum $f(kI)-f(I)$ in \eqref{E:aa} dominates the right-hand
 side of \eqref{E:aaa} by submodularity.
\end{remark}

 In another special instance $L=N$ of Lemma~\ref{L:priext+contr2}, the
 polymatroid $f_{L,t}^*$ is called the \emph{truncation} of $f$ by $t$,
 or to $f(N)-t$. It was applied e.g.\ in~\cite{Chan.linear}
 to investigate linear polymatroids.

\section{Decomposition into tight and modular polymatroids\label{S:dec}}

 The cone \HN of poly\-matroidal rank functions $h$ decomposes into the
 direct sum of the cone \HNti of \emph{tight} rank functions and the cone \HNmod
 of modular functions. Here, $h$ is tight if $h(N)=h(N\sm i)$, $i\in N$.
 The decomposition can be written as $h=\hti+\hmod$ where
 \[
    \begin{split}
    &\hti(I)=h(I)-\mbox{$\sum$}_{i\in I}\,[h(N)-h(N\sm i)]\,,  \\
    &\hmod(I)=\mbox{$\sum$}_{i\in I}\,[h(N)-h(N\sm i)]\,,
    \end{split}     \qquad I\pdm N\,.
 \]
 It is unique because the linear spaces $\HNti-\HNti$ and $\HNmod-\HNmod$
 intersect at the origin. In symbols, $\HN=\HNti\oplus\HNmod$.

\begin{theorem}\label{T:tight}
   If $h\in\clHNent$ then $\hti$ is almost entropic.
\end{theorem}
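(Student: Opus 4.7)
The plan is to reduce $h$ to $\hti$ by iteratively subtracting its modular part one coordinate at a time, using the principal-extension-then-contraction construction recalled in Section~\ref{S:defs}. Write $d_i:=h(N)-h(N\sm i)$ for $i\in N$; monotonicity and submodularity yield $0\leq d_i\leq h(i)$. Fixing $i$, I would apply formula~\eqref{E:ff} with $L=\{i\}$ and $t=d_i$ to obtain
\[
    h^{\ast}(I)=\min\:\big\{\,h(I),\, h(iI)-d_i\,\big\}\,,\qquad I\pdm N\,.
\]
When $i\in I$, $iI=I$ and the minimum equals $h(I)-d_i$. When $i\notin I$, the decreasing-marginals form of submodularity gives $h(iI)-h(I)\geq h(N)-h(N\sm i)=d_i$, so the minimum equals $h(I)$. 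Thus the operation subtracts exactly $d_i\cdot[i\in I]$ from $h$, and by Lemma~\ref{L:aent} the result $h^{\ast}$ is again in \clHNent.

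Next I would iterate over all $i\in N$ in arbitrary order. The subtraction at coordinate $i$ preserves $d_j$ for $j\neq i$ (both $N$ and $N\sm j$ contain $i$, so the same constant $d_i$ is removed from $h^{\ast}(N)$ and $h^{\ast}(N\sm j)$), while $d_i$ itself becomes zero. Likewise the singleton values $h^{\ast}(\{j\})=h(\{j\})$ are preserved for $j\neq i$, keeping the hypothesis $t\leq f(L)$ of Lemma~\ref{L:aent} valid at each subsequent step. After $|N|$ iterations the resulting almost entropic rank function equals $h(I)-\sum_{i\in I}d_i=\hti(I)$, which is the claim.

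The only nontrivial point is the single-step identity $h^{\ast}(I)=h(I)-d_i\cdot[i\in I]$; everything reduces to the submodular estimate $h(iI)-h(I)\geq d_i$ for $I\pdm N\sm i$, which forces the minimum in~\eqref{E:ff} to be attained by $h(I)$ on subsets not containing $i$. Once this inequality is in hand, the iteration is pure bookkeeping, and the almost-entropicity of $\hti$ follows from $|N|$ applications of Lemma~\ref{L:aent}.
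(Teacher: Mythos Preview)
Your argument is correct. Each single step is exactly Lemma~\ref{L:aent} with $L=\{i\}$ and $t=d_i=h(N)-h(N\sm i)$; the submodular inequality $h(iI)-h(I)\geq h(N)-h(N\sm i)$ indeed pins down the minimum in~\eqref{E:ff}, and your bookkeeping checks that at each stage the new polymatroid still has $d_j$ unchanged for the unprocessed $j$ and still satisfies $d_j\leq h^\ast(j)$. After $|N|$ applications of Lemma~\ref{L:aent} you land on $\hti$, which is therefore almost entropic.

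The route, however, differs from the paper's. The paper does everything in a single pass: it doubles the ground set to $N\cup N'$ with each $i'$ parallel to $i$, convolves once with a modular polymatroid on $N\cup N'$ satisfying $g(i')=\hmod(i)$ and $g(i)+g(i')=h(i)$, and then contracts $N'$ in one stroke, recognizing $\hti$ as that contraction. Your approach instead iterates the principal-extension-then-contraction package of Lemma~\ref{L:aent} one coordinate at a time, peeling off each $d_i\,r_1^{N\sm i}$ separately. Your version is arguably more elementary once Lemma~\ref{L:aent} is in hand---no new computation is needed, just the observation that $f^\ast_{\{i\},d_i}=f-d_i\,r_1^{N\sm i}$---whereas the paper's one-shot construction avoids the induction but requires tracking the minimum in~\eqref{E:proof} through several reductions. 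Both rest on the same two ingredients, the convolution theorem of~\cite{M.twocon} and closure of \clHNent under contraction (Lemma~\ref{L:contract}).
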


\begin{proof}
 Let $(N,h)$ be a polymatroid, $N'$ be a disjoint copy of $N$ and
 $i\mapsto i'$ a bijection between them. The polymatroid $(N,h)$
 extends to $(N\cup N',f)$ by
 \[
    f(I\cup J')=h(I\cup J)\,,\qquad I,J\pdm N\,,
 \]
 where $J'=\{j'\colon j\in J\}$. Thus, each $i'$ is parallel to $i$.
 Let $(N\cup N',g)$ be a modular polymatroid. Then, for $I\pdm N$
 \[
    f\ast g\,(I\cup N')=
        \min_{J\pdm I,\:K\pdm N}\:\big[\, h(J\cup K)+g(I\sm J)+g(N'\sm K')\,\big]\,.
 \]
 By monotonicity of $g$, the bracket does not grow when $K$ is replaced
 by $K\cup J$. Hence, the minimization can be restricted to the situations
 when $J=K\cap I$, and
 \[
    f\ast g\,(I\cup N')=
      \min_{K\pdm N}\:\big[\,h(K)+g(I\sm K)+g(N'\sm K')\,\big]\,.
 \]
 If $g(i)+g(i')=h(i)$ for $i\in N$ then
 \[
    f\ast g\,(I\cup N')=
      \min_{K\pdm N}\:
         \Big[\,h(K)+\mbox{$\sum$}_{i\in I\sm K}\,h(i)+g(N'\sm(I'\cup K'))\,\Big]\,.
 \]
 By submodularity of $h$, this minimization can be restricted to $K\supseteq I$, thus
 \begin{equation}\label{E:proof}
    f\ast g\,(I\cup N')=\min_{I\pdm K\pdm N}\:\big[\,h(K)+g(N'\sm K')\,\big]\,,
            \quad I\pdm N\,.
 \end{equation}

 In the case when
 \[
    g(i')=\hmod(i)=h(N)-h(N\sm i)\leq h(i)\,,\qquad i\in N\,,
 \]
 $h$ is decomposed to $\hti+\hmod$ and the minimum in \eqref{E:proof}
 is equal to $\hti(I)+\hmod(N)$. It is attained for $K=I$. It follows that
 $\hti(I)=f\ast g\,(I\cup N')- f\ast g\,(N')$. Hence, \hti is the
 contraction of $f\ast g$ along $N'$.

 If $h\in\clHNent$ then $f$ is almost entropic. The convolution theorem
 \cite[Theorem~2]{M.twocon} implies that $f\ast g\in\clHNent$.
 By Lemma \ref{L:contract}, $\hti\in\clHNent$.
\end{proof}

 The closure of the entropic region decomposes analogously to~\HN.
 As a consequence, the intersection $\clHNent\cap\HNti$
 is equal to
 \[
   \clHNentti=\{ \fti\colon f\in\clHNent\}\,.
 \]

\begin{corollary}\label{C:tight+mod}
   $\clHNent=[\clHNent\cap\HNti]\oplus\HNmod$.
\end{corollary}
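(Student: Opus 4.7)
The plan is to verify the two set-theoretic inclusions, leaving the direct-sum character of the decomposition to be inherited from the ambient identity $\HN = \HNti \oplus \HNmod$. Since every $h \in \HN$ admits a unique expression $h = \hti + \hmod$ with $\hti \in \HNti$ and $\hmod \in \HNmod$, the uniqueness of the summands on the restricted cone $\clHNent$ is automatic, and only a set equality $\clHNent = (\clHNent \cap \HNti) + \HNmod$ needs to be proved.

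The forward inclusion $\clHNent \subseteq (\clHNent \cap \HNti) + \HNmod$ is immediate from Theorem~\ref{T:tight}: given $h \in \clHNent$, decompose $h = \hti + \hmod$; by that theorem $\hti \in \clHNent$, and by construction $\hti \in \HNti$, so $\hti \in \clHNent \cap \HNti$, while $\hmod \in \HNmod$ by definition.

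For the reverse inclusion I would use two facts about $\clHNent$: it is a closed convex cone (so closed under addition), and $\HNmod \subseteq \clHNent$. The latter holds because any modular rank function $g$ on $N$ with $g(i) \geq 0$ is the entropy function of a tuple of independent discrete random variables with marginal entropies $g(i)$, using that every nonnegative real is realized as the Shannon entropy of some discrete distribution. Hence for $f \in \clHNent \cap \HNti$ and $g \in \HNmod$ one has $f, g \in \clHNent$ and consequently $f + g \in \clHNent$.

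The essential content of the corollary sits in Theorem~\ref{T:tight}, which has already been established via the convolution-and-contract construction; the remaining steps here are routine, and I anticipate no real obstacle beyond recalling that $\HNmod$ is contained in the (open, not only closed) entropic region.
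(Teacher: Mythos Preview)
Your proposal is correct and follows essentially the same approach as the paper: the forward inclusion via Theorem~\ref{T:tight} together with $\HN=\HNti\oplus\HNmod$, and the reverse inclusion via $\HNmod\subseteq\HNent$ and the fact that the convex cone $\clHNent$ is closed under addition. The paper cites \cite[Lemma~2]{M.twocon} for $\HNmod\subseteq\HNent$, whereas you spell out the independent-variables construction, but the argument is the same.
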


\begin{proof}
 Theorem~\ref{T:tight} and $\HN=\HNti\oplus\HNmod$ imply the inclusion $\pdm$.
 The reverse one follows from the facts that \clHNent is a convex cone and
 $\HNmod\pdm\HNent$ \cite[Lemma~2]{M.twocon}.
\end{proof}

 It is open whether $\HNent$ equals $[\HNent\cap\HNti]\oplus\HNmod$.

\smallskip
 In the remaining part of this section it is shown that Corollary~\ref{C:tight+mod}
 is equivalent to \cite[Theorem 1]{Chan.bal} on balanced inequalities.

 Any nonempty closed convex cone $K$ in a Euclidean space is expressible as intersection
 of homogeneous closed halfspaces. This is reflected in the notion of the \emph{polar}
 cone $K^\circ$ of $K$ that consists of the outer normal vectors to $K$ at the origin,
 \[
    K^\circ=\big\{(\vte_I)_{I\pdm N}\in\R^{\pomn{N}}\colon\tsum_{I\pdm N}\:\vte_I h(I)\leq0
            \;\;\text{for all}\;\;h\in K\big\}\,,
 \]
 see \cite[Section~14]{Rock}. For example, the polar of \HNmod can be defined by the
 inequalities $\sum_{I\ni i}\vte_I\leq0$, $i\in N$; substituting
 ${r_{\scriptscriptstyle1}}^{\scriptscriptstyle\!\!N\sm i}$, $i\in N$, for $h$.
 The polars of $\HNent$ and $\clHNent$ coincide and are defined by
 the very linear information-theoretic inequalities.

 By \cite[Corollary 16.4.2]{Rock}, Corollary~\ref{C:tight+mod} is equivalent to
 \begin{equation}\label{E:inters}
    (\HNent)^\circ=(\clHNentti)^\circ\cap (\HNmod)^\circ\,.
 \end{equation}
 It was used tacitly also that $(\HNent)^{\circ\circ}=\clHNent$, and
 that \clHNentti, \HNmod and their sum are closed. The polar of $\clHNentti$
 consists of the vectors $(\vte_I)_{I\pdm N}$ satisfying
 $\mbox{$\sum$}_{I\pdm N}\:\vte_I\hti(I)\leq0$, $h\in\clHNent$, which
 rewrites to
 \begin{equation}\label{E:balance}
        \mbox{$\sum$}_{I\pdm N}\:\vte_I h(I)
           -\mbox{$\sum$}_{i\in N}\,[h(N)-h(N\sm i)]\mbox{$\sum$}_{I\ni i}\vte_I\leq0\,,
                \quad h\in\HNent\,.
 \end{equation}
 In turn, \eqref{E:inters} can be rephrased as \cite[Theorem 1]{Chan.bal}:
 \emph{Given $(\vte_I)_{I\pdm N}$, the inequality $\sum_{I\pdm N}\:\vte_I h(I)\leq0$
 holds for all $h\in\HNent$ if and only if \eqref{E:balance} is valid and
 $\sum_{I\ni i}\vte_I\leq0$, $i\in N$.}

 Writing, $\te_I=\vte_I$ when $|I|<|N|-1$,  $\te_{N\sm i}=\vte_{N\sm i}+\sum_{I\ni i}\vte_I$,
 $i\in N$, and $\te_N=\vte_N-\sum_{I\pdm N}|I|\vte_I$ the inequality in~\eqref{E:balance}
 rewrites to $\sum_{I\pdm N}\:\te_Ih(I)\leq0$. This one is \emph{balanced} in the sense
 $\sum_{I\ni i}\te_I=0$, $i\in N$. Thus, \eqref{E:balance} expresses all balanced
 information-theoretic inequalities.

\section{Entropy region: regular faces of \clHNent\label{S:ri}}

 As mentioned earlier, the relative interior of \clHNent
 belongs to the entropy region \HNent. Thus, \HNent and
 \clHNent differ only on the relative boundary of the
 latter. This section proves a stronger relation between
 them, motivated by the decomposition in Corollary 2.

\begin{theorem}\label{T:ri.ent}
   $\ri{\clHNentti}\oplus\HNmod\pdm\HNent\,$.
\end{theorem}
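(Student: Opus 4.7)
The plan is to reduce the claim to showing $\ri{\clHNentti}\pdm\HNent$ and then to deduce the latter from the earlier result $\ri{\clHNent}\pdm\HNent$ of \cite[Theorem~1]{M.twocon} by a perturbation-and-extraction argument.

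For the reduction, observe that \HNent is closed under addition (independent random vectors have additive joint entropy) and contains \HNmod (a modular polymatroid is realized by mutually independent random variables), so $\HNent+\HNmod=\HNent$. Hence $\ri{\clHNentti}\oplus\HNmod\pdm\HNent$ follows from $\ri{\clHNentti}\pdm\HNent$.

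For the core inclusion, fix $h\in\ri{\clHNentti}$ and let $g\in\ri{\HNmod}$ be any strictly positive modular polymatroid. Corollary~\ref{C:tight+mod} together with the standard identity $\ri{C_1\oplus C_2}=\ri{C_1}\oplus\ri{C_2}$ for direct sums of convex cones yields $h+g\in\ri{\clHNent}$, and hence $h+g\in\HNent$ by \cite[Theorem~1]{M.twocon}. To pass from the entropic $h+g$ to an entropic realization of $h$ itself, I would mimic the parallel-extension-plus-convolution construction used in the proof of Theorem~\ref{T:tight}, but run on the entropic $h+g$ rather than on a generic almost entropic function: realize $h$ as the contraction along $N'$ of a polymatroid $\tilde f\ast g'$ on $N\cup N'$, where $\tilde f$ is the parallel extension of $h+g$ and $g'$ is a modular polymatroid on $N\cup N'$ tailored so that the contraction extracts exactly $h$, and so that $\tilde f\ast g'$ itself lies in the relative interior of the closure of the entropy region on the larger ground set $N\cup N'$, making it entropic by \cite[Theorem~1]{M.twocon} applied there.

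The principal obstacle is this last step. Contractions of entropic polymatroids are in general only almost entropic by Lemma~\ref{L:contract}, so the construction just sketched, taken at face value, only recovers $h\in\clHNent$, which is already known. Upgrading to $h\in\HNent$ will require exploiting the relative-interior hypothesis on $h$ together with the strict positivity of $g$ to arrange the conditioning fibers in the contraction so symmetrically that the convex combination of fiber-conditional entropy functions collapses to a single entropic function, providing an actual random-vector realization of $h$. This tightening --- making the contraction exactly entropic rather than a genuine mixture --- is where the delicate work is expected to lie.
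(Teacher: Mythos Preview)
Your reduction step is correct and matches the paper: since $\HNmod\pdm\HNent$ and $\HNent$ is closed under sums, the theorem follows from $\ri{\clHNentti}\pdm\HNent$.

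The gap is in the core argument. You correctly observe that $h+g\in\ri{\clHNent}\pdm\HNent$ for strictly positive modular~$g$, and you correctly identify that the construction of Theorem~\ref{T:tight}, applied to the entropic $h+g$, recovers $h$ only as an \emph{almost} entropic point, because contractions of entropic polymatroids are in general only almost entropic (Lemma~\ref{L:contract}). Your proposed repair --- arranging the conditioning fibers ``so symmetrically that the convex combination of fiber-conditional entropy functions collapses to a single entropic function'' --- is not an argument but a wish. Nothing in the relative-interior hypothesis on~$h$ or the strict positivity of~$g$ forces the conditional distributions in the contraction to coincide, and in general they will not. This is not a technical detail to be tightened later; it is the entire difficulty, and your sketch provides no mechanism to overcome it.

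The paper takes a different route that avoids subtracting modular parts from entropic points. It first proves a separate density lemma (Lemma~\ref{L:ri.ent}): \clHNentti contains a dense set of entropic points. This is established by a nontrivial construction starting from a group-generated polymatroid $f$ approximating a given tight target, and augmenting the underlying random variables by auxiliary variables $\eta_j$ (built from coset structures) so that the resulting entropy function $h'$ is \emph{exactly} tight and close to~$f$. With this density in hand, the proof of Theorem~\ref{T:ri.ent} uses a second ingredient you did not invoke: the tight matroidal functions $r_1^J$, $|J|<|N|-1$, are linearly independent, span a full-dimensional subcone of $\HNti\cap\HNent$, and their positive combinations are entropic. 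For $g\in\ri{\clHNentti}$ one finds a nearby entropic tight point $h$ with $g-h$ a positive combination of the $r_1^J$, whence $g=h+(g-h)$ is a sum of two entropic points. Thus the paper \emph{adds} entropic perturbations rather than attempting to subtract them, which is why the contraction obstacle never arises.
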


 The proof presented below is based on an auxiliary lemma.
 At the end of the section, faces of \clHNent are discussed.

\begin{lemma}\label{L:ri.ent}
   The cone \clHNentti contains a dense set of entropic points.
\end{lemma}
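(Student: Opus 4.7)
The plan is to approximate any $h\in\clHNentti$ by entropic tight polymatroids via a conditioning argument based on MAP decoding. Since $\HNent$ is dense in $\clHNent$, pick entropic $h_n\to h$ realized by random vectors $(\xi^{(n)}_i)_{i\in N}$ on finite alphabets $X^{(n)}_i$. Tightness of $h$ forces the modular parts to shrink: $H(\xi^{(n)}_i\mid\xi^{(n)}_{N\sm i})=h_n(N)-h_n(N\sm i)\to 0$ for each $i\in N$ as $n\to\infty$.

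For each $n$ and $i$, let $\varphi^{(n)}_i$ be a MAP estimator of $\xi^{(n)}_i$ given $\xi^{(n)}_{N\sm i}$, and set $A_n=\bigcap_{i\in N}\{\xi^{(n)}_i=\varphi^{(n)}_i(\xi^{(n)}_{N\sm i})\}$. Fano's inequality bounds each per-coordinate error probability by a quantity tending to $0$ with the corresponding conditional entropy, so a union bound gives $\Pr(A_n)\to 1$. Define $\tilde\xi^{(n)}$ as $\xi^{(n)}$ conditioned on $A_n$: on $A_n$, every $\tilde\xi^{(n)}_i$ is a deterministic function of $\tilde\xi^{(n)}_{N\sm i}$, so the entropy function $\tilde h_n$ of $\tilde\xi^{(n)}$ satisfies $\tilde h_n\in\HNent\cap\HNti$.

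The convergence $\tilde h_n\to h$ follows from a standard perturbation bound of the form $|H(\xi^{(n)}_I)-H(\tilde\xi^{(n)}_I)|=O(h_2(\Pr A_n^c)+\Pr(A_n^c)\log|X^{(n)}_I|)$, where $h_2$ is the binary entropy. I would choose the approximating sequence so that the alphabet sizes $|X^{(n)}_i|$ remain uniformly bounded in $n$, which is possible since $h_n(i)\to h(i)<\infty$ allows realizations on alphabets of size $O(2^{h_n(i)+1})$ after a mild quantization of atypical values. Combined with $\Pr(A_n^c)\to 0$, this forces $\tilde h_n\to h^{ti}=h$, establishing density of $\HNent\cap\HNti$ in $\clHNentti$.

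The main obstacle is the quantitative coupling among three rates: the conditional entropies $H(\xi^{(n)}_i\mid\xi^{(n)}_{N\sm i})\to 0$ (governed by $h_n\to h$), the resulting Fano bound on $\Pr(A_n^c)$, and the alphabet sizes $|X^{(n)}_i|$ that appear in the entropy-perturbation estimate. Passing to a suitable subsequence of $h_n$, and truncating tails of the distributions to keep alphabets bounded, reconciles these rates; one may alternatively replace the MAP-based conditioning by a typical-set argument, which decouples the entropy perturbation from the raw support size.
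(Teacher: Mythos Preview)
Your conditioning idea is natural and the tightness of $\tilde h_n$ is correct, but the step you yourself flag as the main obstacle is a genuine gap, and the fixes you propose do not close it.

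The entropy--perturbation bound you invoke really does require control of $\Pr(A_n^c)\log|X^{(n)}_I|$, and your way of securing this is to keep the alphabets $|X^{(n)}_i|$ uniformly bounded. There is, however, no justification for the claim that an arbitrary $h\in\clHNentti$ admits approximating entropic points $h_n$ realized on uniformly bounded alphabets. Bounded $h_n(i)$ constrains only the \emph{effective} support of $\xi^{(n)}_i$; merging low--probability values of one coordinate (``mild quantization'') perturbs every joint entropy $h_n(I)$ with $I\ni i$ in a way you have not controlled, and in general cannot control without already knowing something like the statement you are proving. The sharpness of the bound $|H(P)-H(Q)|\lesssim \|P-Q\|_{TV}\log|X|$ is exactly the point: a distribution with a $(1-\varepsilon)$--atom and the remaining mass spread over $M$ values has entropy $h_2(\varepsilon)+\varepsilon\log M$, so conditioning on an event of probability $1-\varepsilon$ can shift the entropy by a fixed constant when $M$ grows suitably with $\varepsilon$. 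Nothing in your setup excludes this behaviour for $H(\xi^{(n)}_I\mid A_n^c)$. The ``typical--set'' alternative is too vague to rescue the argument: you are not taking i.i.d.\ blocks, and if you did, the resulting entropic point would have to be divided by the block length, after which it is no longer entropic.

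The paper takes a quite different route that sidesteps the alphabet issue entirely. It first approximates $g\in\clHNentti$ by $\tfrac1\ell f$ with $f$ group--generated (realized by coset maps on a finite group $G$); then, using the coset structure, it constructs explicit auxiliary variables $\eta_j$ on $G$ so that the augmented vector $\zeta_i=(\xi_i,\eta_N)$ has a \emph{tight} entropy function $h'$ with $\|h'-f\|_\infty\le \fmod(N)$. The error is thus bounded directly by the modular defect, which is already small because $g$ is tight, and no alphabet size enters. A final appeal to the fact that $\tfrac1\ell h'+\delta r_1$ is entropic and tight for every $\delta>0$ finishes the proof. The essential difference is that the group structure furnishes an explicit tightening with an \emph{a priori} error estimate, precisely what your conditioning construction lacks.
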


 A proof resorts to polymatroids constructed from groups. Recall
 that a polymatroid $(N,f)$ is \emph{group-generated} if there
 exists a finite group $G$ with subgroups $G_i$, $i\in N$, such
 that $f(I)=\ln|G|/|G_I|$ for  $I\pdm N$. Here, $G_I$ abbreviates
 $\bigcap_{i\in I}\:G_i$. Such  a polymatroid is always entropic.
 In fact, the group $G$ is endowed with the uniform probability
 measure and the polymatroid equals the entropy function of
 $(\xi_i)_{i\in N}$ where $\xi_i$ is the factor\-mapping of
 $G$ on the family $G/G_i$ of left cosets of $G_i$. The divisions
 of the group-generated polymatroidal rank functions by
 positive integers are dense in \clHNent \cite[Theorem~4.1]{Chan.Yeung}.

\begin{proof}[Proof of Lemma~\ref{L:ri.ent}]
 Given $\vare>0$ and $g\in\clHNentti$ there exists, by Theorem~\ref{T:tight},
 a random vector whose entropy function $h$ satisfies
 \[
    \norm{h-g}\triangleq\max_{I\pdm N}|h(I)-g(I)|\leq\vare\,.
 \]
 Since $g$ is tight
 \[\begin{split}
    \hmod(N)&=\hmod(N)-\gmod(N)\\
            &\leq \mbox{$\sum$}_{i\in N}\,
                |h(N)-g(N)|+|h(N\sm i)-g(N\sm i)|\leq2\vare|N|\,.
 \end{split}\]
 It is possible to assume that the vector sits on a finite set endowed with
 the uniform probability measure. By \cite[Remark~11]{M.twocon}, there exists
 a group $G$, group-generated polymatroid $f$ and integer $\ell\geq1$ such that
 $\norm{\frac{1}{\ell}f-h}\leq\vare$. Therefore,
 \[
    \tfrac{1}{\ell}\fmod(N)\leq|\tfrac{1}{\ell}\fmod(N)-\hmod(N)|+\hmod(N)\leq4\vare|N|\,.
 \]
 Let $(\xi_i)_{i\in N}$ be the corresponding random vector of factor\-mappings
 of $G$ onto $G/G_i$ whose entropy function equals $f$. If $I\pdm N$ then
 $\xi_I\triangleq(\xi_i)_{i\in I}$ takes $|G/G_I|$ values, each one with
 the same probability and $f(I)=\ln|G/G_I|$. Therefore, for every $j\in N$ there
 exists a random variable $\eta_j$ defined on $G$ such that it is constant on
 each coset of $G/G_N$, takes $|G_{N\sm j}/G_N|$ values and $(\xi_{N\sm j},\eta_j)$
 takes $|G/G_N|=|G/G_{N\sm j}||G_{N\sm j}/G_N|$ values. Necessarily, $\eta_j$ is
 a function of $\xi_N$, its entropy is $\ln|G_{N\sm j}/G_N|$, $\eta_j$ is
 stochastically independent of $\xi_{N\sm j}$, and they together determine
 $\xi_N$. Let $h'$ denote the entropy function of $(\zeta_i)_{i\in N}$ where
 $\zeta_i=(\xi_i,\eta_N)$ and $\eta_N=(\eta_j)_{j\in N}$. By construction,
 $h'(N\sm i)$ is the entropy of $(\xi_{N\sm i},\eta_{N})$, $i\in N$, and
 $h'(N)$ is the entropy of $(\xi_{N},\eta_{N})$. Hence, $h'$ is a tight
 entropy function. For $I\pdm N$
 \[
    f(I)\leq h'(I)\leq f(I)
        +\mbox{$\sum$}_{j\in N}\,\ln|G_{N\sm j}/G_N|
        =f(I)+\fmod(N)\,.
 \]
 It follows that $\norm{\frac{1}{\ell}h'-\frac{1}{\ell}f}
 \leq\frac{1}{\ell}\fmod(N)\leq4\vare|N|$. Hence,
 \[
    \norm{\tfrac{1}{\ell}h'-g}
        \leq\norm{\tfrac{1}{\ell}h'-\tfrac{1}{\ell}f}+\norm{\tfrac{1}{\ell}f-h}+\norm{h-g}
        \leq 4\vare|N|+2\vare\,.
 \]
 By \cite[Lemma~4]{M.twocon}, the tight polymatroid $\frac{1}{\ell}h'+\delta r_1$
 is entropic for any $\delta>0$. Thus, $\norm{(\tfrac{1}{\ell}h'+\vare r_1)-g}
 \leq 4\vare|N|+3\vare$ where $\vare$ can be arbitrarily small.
\end{proof}

\begin{remark}
  It is of separate interest that \clHNentti contains a dense set of points
  in the form $\frac{1}{m}h''$ where $h''$ is group-generated and $m\geq1$
  integer. In fact, the tight entropy function $h'$ from the previous proof
  need not be group-generated but arises from random variables defined on $G$
  with the uniform probability measure. Then, by \cite[Remark~11]{M.twocon},
  $h'$ can be arbitrarily well approximated by $\frac{1}{m}h''$ with $h''$
  group-generated. Since $h'$ is tight the construction of that remark
  provides $h''$ tight as well. Thus, to a given $g\in\clHNentti$ it is
  possible to construct $\frac{1}{\ell m}h''$ arbitrarily close,
  as in the above proof.
\end{remark}

\begin{proof}[Proof of Theorem~\ref{T:ri.ent}]
 Since, $\HNmod\pdm\HNent$ \cite[Lemma~2]{M.twocon} and $\HNent$ is closed
 to sums it suffices to prove that  $\ri{\clHNentti}\pdm\HNent\,$.
 Argumentation is analogous to that in the proof of \cite[Theorem~1]{M.twocon}.
 By~\cite[Lemma~3]{M.twocon}, the matroidal rank functions $r_1^J$ with $J\pdm N$ and $|J|<|N|-1$
 are linearly independent. Since they are tight and their nonnegative combinations
 are entropic they span a polyhedral cone contained $\HNti\cap\HNent$ whose dimension
 $2^{|N|}-|N|-1$ is the same as that of $\clHNentti$. Therefore, if $\epsilon>0$ then
 the set $B_{\epsilon}$ of polymatroids $\sum_{J\colon |J|<|N|-1}\,\alpha_J\, r_1^J$,
 where $0<\alpha_J<\epsilon$, is open in the linear space $\HNti-\HNti$ and the shifts
 of these sets provide a base for the relative topology.

 Hence, if $g$ belongs to the relative interior of $\clHNentti$ then it
 belongs to such a shift contained in the relative interior. It follows
 that $g-B_{\epsilon}$ is a subset of the relative interior for $\epsilon>0$
 sufficiently small. Since $g-B_{\epsilon}$ is a relatively open subset of
 $\clHNentti$ it contains an entropic polymatroid $h$, by Lemma~\ref{L:ri.ent}.
 This implies that $g$ can be written as $h+\sum_{J\colon |J|<|N|-1}\,\alpha_J r_1^J$
 where all $\alpha_J$ are nonnegative, and thus is entropic.
\end{proof}

 A convex subset $F$ of a convex set $K$ is a \emph{face} if it contains
 every line segment of $K$ which has an interior point in $F$. A face of
 a convex cone is a convex cone.

 Let us call a face $F$ of \clHNent \emph{regular} if all relative interior
 points of $F$ are entropic, thus $\ri{F}\pdm\HNent$. The trivial face
 $F=\clHNent$ is regular. Since the cones \clHNentti and \HNmod are defined
 by imposing tightness in monotonicity and submodularity, they are faces of
 \clHNent, by Theorem~\ref{T:tight}. Each face of \HNmod is a face of \clHNent,
 and is regular because $\HNmod\pdm\HNent$. By Theorem~\ref{T:ri.ent},
 the face $F=\clHNentti$ is regular.

\section{Selfadhesivity and tightness\label{S:self.adh}}

 This section recalls the notion of selfadhesivity and explores its
 relation to the decomposition $h=\hti+\hmod$ of polymatroids.
 The role of selfadhesivity in proving information-theoretic inequalities
 is briefly discussed and compared to an alternative technique by \cite{MMRV}.

 Two polymatroids $(N,h)$ and $(M,g)$ are \emph{adhesive} \cite{M.adh}, or adhere,
 if a polymatroid $(N\cup M,f)$ exists such that $f(I)=h(I)$ for $I\pdm N$,
 $f(J)=g(J)$ for $J\pdm M$, and $f(N)+f(M)=f(N\cup M)+f(N\cap M)$. Thus, the
 rank function $f$ is a common extension of $h$ and $g$, and the last equality
 expresses the adherence. A poly\-matroid $(N,h)$ is \emph{self\-adhesive at $O\pdm N$}
 if it adheres with the $\pi$-copy $(\pi(N),h_\pi)$ defined by a bijection
 $\pi\colon N\to\pi(N)$ such that $O=N\cap\pi(N)$, $\pi(i)=i$ for $i\in O$,
 and $h_\pi(\pi(I))=h(I)$, $I\pdm N$. A~poly\-matroid is
 \emph{self\-adhesive} if it is self\-adhesive at each $O\pdm N$.

 The rank functions of selfadhesive polymatroids on~$N$ form the polyhedral cone
 \HNsa \cite{M.adh}. This cone decomposes similarly to $\HN=\HNti\oplus\HNmod$.

\begin{theorem}\label{T:sa+tight}
    If $h\in\HNsa$ then $\hti$ is self\-adhesive.
\end{theorem}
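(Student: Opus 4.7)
The plan is to adapt the proof of Theorem~\ref{T:tight}, substituting $\HNsa$ for $\clHNent$ throughout. Given $h\in\HNsa$, form the parallel extension $(N\cup N',f)$ with each $i'\in N'$ parallel to $i\in N$, and the modular polymatroid $(N\cup N',g)$ with $g(i)$ large and $g(i')=\hmod(i)$. The purely polymatroidal computation leading to~\eqref{E:proof} still identifies $\hti$ with the contraction of $f\ast g$ along $N'$; nothing in that derivation invoked entropicity of $h$. It therefore suffices to verify three closure properties of the cone $\HNsa$: closure under parallel extension, under convolution with a modular polymatroid, and under contraction.

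For each closure I would produce an adhesive extension of the output from one for the input, working at a fixed shared subset~$O$. Closure under parallel extension is transparent: if $F$ is an adhesive extension of $h$ on $N\cup\pi(N)$, taking its own parallel extension by copies of $N$ on both sides yields an adhesion witness for $f$ at $O\cup N'$. Closure under contraction is equally direct: given an adhesive extension of $f\ast g$ at $O\cup N'$, contracting it along the shared $N'$ produces the adhesive extension of $\hti$ at $O$. For convolution with modular, given an adhesive extension $F$ of $h$ at $O$ and a modular $(N,g')$, let $\tilde g'$ be the modular polymatroid on $N\cup\pi(N)$ with $\tilde g'(k)=g'(k)$ for $k\in N$ and $\tilde g'(\pi(k))=g'(k)$. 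I would verify that $F\ast\tilde g'$ restricts to $h\ast g'$ on $N$ and to the $\pi$-copy on $\pi(N)$, and satisfies the adhesion equality $\trn{F\ast\tilde g'}{N,\pi(N)}=0$.

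The main obstacle is this last adhesion equality, since convolution typically destroys tightness of submodular inequalities. The key observation is that both $F$ and $\tilde g'$ satisfy $\trn{\cdot}{N,\pi(N)}=0$, the latter trivially because $\tilde g'$ is modular. I would expand the four convolution minima that define $\trn{F\ast\tilde g'}{N,\pi(N)}$ in terms of their optimal partitions inside $N\cup\pi(N)$ and use the adhesion of $F$ together with modularity of $\tilde g'$ to align the minimizers on the $N$ and $\pi(N)$ sides so that the contributions balance. This is the step where selfadhesivity of $h$ is used in an essential way.
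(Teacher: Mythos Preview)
Your reduction to three closure properties has a genuine gap: the convolution step fails in general, and the way you have set things up, the argument would prove too much.

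Observe first that your parallel-extension step is \emph{trivially} true and does not use the hypothesis. For any polymatroid $h$ (selfadhesive or not), the parallel extension $f$ on $N\cup N'$ satisfies $f(O\cup N')=h(N)=f(N\cup N')$; hence $N\setminus O$ lies in the $f$-closure of $O\cup N'$, and $f$ is selfadhesive at $O\cup N'$ simply by adjoining $\pi(i)$ parallel to $i$. (Your suggested construction, parallel-extending the witness $F$ of $h$, does \emph{not} work: in that extension $i'$ is parallel to $i$ but not to $\pi(i)$, so the restriction to $\pi(N)\cup N'$ is not $f_\pi$.) The contraction step is also purely formal. So if ``selfadhesive at $O\cup N'$ is preserved under convolution with modular'' were true, your chain would yield that $\hti$ is selfadhesive at every $O$ for \emph{every} polymatroid $h$. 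That is false: $\bar r_{ij}$ is tight and violates the Zhang--Yeung inequality, hence is not selfadhesive. Concretely, taking $h=\bar r_{ij}$, the parallel extension $f$ is selfadhesive at $O\cup N'$ but $f\ast g$ cannot be, since its contraction along $N'$ is $\bar r_{ij}$ itself.

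The difficulty you flag---that convolution destroys the adhesion equality $\trn{\cdot}{N,\pi(N)}=0$---is thus not a technicality to be overcome but an actual obstruction. The paper's proof sidesteps it by applying the tightening construction directly to the adhesive extension $\hat h$ on $\hat N=N\cup\pi(N)$: one parallel-extends $\hat h$ to $\hat N\cup\hat N'$, convolves with a carefully chosen modular $g$ on $\hat N\cup\hat N'$ (with $g(i')=g(\pi(i)')=\hmod(i)$), and contracts along $\hat N'$. The computation then shows this contraction restricts to $\hti$ on $N$, to its $\pi$-copy on $\pi(N)$, and has rank $2\hti(N)-\hti(O)$. The selfadhesivity hypothesis enters through $\hat h$, not through a closure property of $\HNsa$.
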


\begin{proof}
 Let a polymatroid $(N,h)$ adhere with a $\pi$-copy at $O=N\cap\pi(N)$ and
 $\hat{N}=N\cup\pi(N)$. Thus, there exists an adhesive extension $(\hat{N},\hat{h})$.
 This extension is further extended to $(\hat{N}\cup\hat{N}',f)$, doubling each
 element of $\hat{N}$ by a parallel one in $\hat{N}'$, disjoint with $\hat{N}$.
 Similarly to the proof of Theorem~\ref{T:tight}, a modular polymatroid
 $(\hat{N}\cup\hat{N}',g)$ is constructed below such that the contraction of
 $f\ast g$ along $\hat{N}'$ witnesses that $(N,\hti)$ is selfadhesive at~$O$.

 The modular rank function $g$ is defined by
 \[\begin{split}
    &g(i)=g(\pi(i))=h(i)+h(N\sm i)-h(N)\,,\\
    &g(i')=g(\pi(i)')=h(N)-h(N\sm i)\,,
   \end{split}\qquad i\in N\,.
 \]
 Since $g(i)+g(i')=\hat{h}(i)$ and $g(\pi(i))+g(\pi(i)')=\hat{h}(\pi(i))$,
 an analogue of \eqref{E:proof} takes the form
 \begin{equation} \label{E:proof2}
   f\ast g(I\cup\hat{N}')
      =  \min_{I\pdm K\pdm\hat{N}}\:\big[\,\hat{h}(K)+g(\hat{N}'\sm K')\,\big]\,,
            \qquad I\pdm\hat{N}\,,
 \end{equation}
  arguing as in the proof of Theorem~\ref{T:tight}.

 If $i\in N\sm O$ then
 \[\begin{split}
    \hat{h}^{\scriptstyle\text{\sf m}}(i)= \hat{h}( \hat{N})- \hat{h}( \hat{N}\sm i)
        &= 2h(N)-h(O)-[h(N\sm i)+h(N)-h(O)]\\
        &=h(N)-h(N\sm i)=\hmod(i)=g(i')
  \end{split}
 \]
 because $\hat{h}$ is an adhesive extension of $h$ and $h_\pi$.  Analogously,
 if $i\in N\sm O$ then $\hat{h}^{\scriptstyle\text{\sf m}}(\pi(i))=\hmod(i)=g(\pi(i)')$.
 Therefore, the bracket in \eqref{E:proof2} rewrites to
 \[\begin{split}
    &\hat{h}^{\scriptstyle\text{\sf ti}}(K)
       +\hat{h}^{\scriptstyle\text{\sf m}}(K)
       +\hat{h}^{\scriptstyle\text{\sf m}}(\hat{N}\sm (O\cup K))+g(O'\sm K')\\
    &= \hat{h}^{\scriptstyle\text{\sf ti}}(K)
       +\hat{h}^{\scriptstyle\text{\sf m}}(\hat{N}\sm(O\sm K))
       +\hmod(O\sm K)\,.
 \end{split}\]
 Hence, the minimization in \eqref{E:proof2} can be further restricted
 to $K\pdm I\cup O$ and
 \[
   f\ast g(I\cup\hat{N}')
      = \hat{h}^{\scriptstyle\text{\sf m}}(\hat{N}\sm(I\cup O))+
        \min_{I\pdm K\pdm I\cup O}\:\big[\,\hat{h}(K)+\hmod(O\sm K)\,\big]\,,
            \quad I\pdm\hat{N}\,.
 \]

 The above minimum can be found in special cases. First,
 \[
    f\ast g(\hat{N}\cup\hat{N}')=\hat{h}(\hat{N})=2h(N)-h(O)
 \]
 using that $\hat{h}$ extends adhesively $h$ and its $\pi$-copy. Second,
 \[
   f\ast g(I\cup\hat{N}')
      = \hti(I)+\hmod(N)+\hmod(N\sm O)\,,
                \qquad I\pdm N\,,
 \]
 using that $\hat{h}(K)+\hmod(O\sm K)=\hti(K)+\hmod(I\cup O)$. Third,
 \[
   f\ast g(\pi(I)\cup\hat{N}')
      = \hti(I)+\hmod(N)+\hmod(N\sm O)\,,
                \qquad I\pdm N\,,
 \]
 by symmetry. It follows that the contraction of $f\ast g$ along $\hat{N}'$
 extends $\hti$ and its $\pi$-copy. The rank of the contraction is
 \[
    [2h(N)-h(O)]-[\hmod(N)+\hmod(N\sm O)]
        = 2\hti(N)-\hti(O)
 \]
 whence the extension is selfadhesive.
\end{proof}

\begin{corollary}\label{C:sa+tight}
    $\HNsa=[\HNsa\cap\HNti]\oplus\HNmod$.
\end{corollary}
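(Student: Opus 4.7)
The plan is to mimic the proof of Corollary~\ref{C:tight+mod}, substituting Theorem~\ref{T:sa+tight} for Theorem~\ref{T:tight}. The directness of the sum is automatic from $\HN=\HNti\oplus\HNmod$, so only the two set inclusions need attention, and each of them is short.

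For the inclusion $\subseteq$, I would take $h\in\HNsa$ and write the unique decomposition $h=\hti+\hmod$ coming from $\HN=\HNti\oplus\HNmod$. Theorem~\ref{T:sa+tight} gives $\hti\in\HNsa$, so $\hti\in\HNsa\cap\HNti$, while $\hmod\in\HNmod$ by construction; this places $h$ in the right-hand side.

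For the reverse inclusion, the two facts to invoke are that $\HNsa$ is a convex cone (recorded above as a polyhedral cone from \cite{M.adh}) and that $\HNmod\subseteq\HNsa$. The second fact is the only genuine piece of verification and is the one place where something needs to be checked. I would prove it by exhibiting a concrete adhesive extension: given a modular $h$ on $N$, any bijection $\pi$ with $O=N\cap\pi(N)$, and the $\pi$-copy $h_\pi$, define $\hat{h}$ on $\hat{N}=N\cup\pi(N)$ by $\hat{h}(K)=\sum_{k\in K\cap N}h(k)+\sum_{\pi(i)\in K\sm N}h(i)$, which is modular, extends both $h$ and $h_\pi$, and trivially satisfies $\hat{h}(N)+\hat{h}(\pi(N))=\hat{h}(\hat{N})+\hat{h}(O)$ since both sides equal $\sum_{k\in N}h(k)+\sum_{\pi(i)\in\pi(N)}h(i)$. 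Hence every modular polymatroid is selfadhesive at each $O$.

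Once $\HNmod\subseteq\HNsa$ is in hand, any $\hti\in\HNsa\cap\HNti$ and any $g\in\HNmod$ satisfy $\hti,g\in\HNsa$, so $\hti+g\in\HNsa$ by convexity of $\HNsa$ as a cone. This gives the inclusion $\supseteq$, completing the proof. There is no real obstacle beyond the modular-is-selfadhesive check; the main conceptual content is already carried by Theorem~\ref{T:sa+tight}.
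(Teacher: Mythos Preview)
Your proposal is correct and follows essentially the same approach as the paper's proof: the inclusion $\subseteq$ comes from Theorem~\ref{T:sa+tight}, and the reverse inclusion from the facts that $\HNmod\subseteq\HNsa$ and that $\HNsa$ is a convex cone. The only difference is that the paper states the first fact in one phrase (``the modular polymatroids have selfadhesive modular extensions'') while you spell out the explicit modular extension $\hat{h}$ and verify adhesion; this is a harmless elaboration, not a different argument.
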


\begin{proof}
 The inclusion $\pdm$ follows from Theorem~\ref{T:sa+tight}. Since the
 modular polymatroids have selfadhesive modular extensions and  \HNsa
 is a convex cone the opposite inclusion holds as well.
\end{proof}

 The convex cone \clHNent is not polyhedral \cite{M.infinf}, thus its polar cone
 is not finitely generated. There are infinite sets of linear information-%
 theoretic inequalities \cite{Zh.gen.ineq,MMRV,DFZ,DFZ.nonSh,Csir.book}, some
 of them rigorously proved and hundreds of them generated in computer experiments.
 The experiments are based on the fact that the entropic polymatroids are
 selfadhesive, $\HNent\pdm\HNsa$, and iterations of the idea. None of the
 experiments seems to have taken into account the possible reduction
 by imposing the tightness, cf.\ Corollary~\ref{C:sa+tight}.

\medskip
 A linear information-theoretic inequality
 \[
    \tsum_{I\pdm N}\:\vte_I h(I)\leq0 \;\;\text{for all}\;\;h\in\HNent
 \]
 is of non-Shannon type if $(\vte_I)_{I\pdm N}\in(\HNent)^\circ$ is not
 in $\HN^\circ$. There are two techniques for proving non-Shannon-type
 inequalities: either by selfadhesivity, as implicit in the original proof
 of Zhang-Yeung inequality \cite{ZhY.ineq.cond}, or alternatively by a
 lemma of Ahlswede and K\"{o}rner~\cite{CsiK}, as proposed in \cite{MMRV}.
 Recently it was found that the two techniques have the same power~\cite{Kaced.bal}.
 Actually, the original lemma from~\cite{CsiK} is not needed and only
 the following version on extensions suffices for proofs of~\cite{MMRV,Kaced.bal}.

\begin{lemma}\label{L:AhlK}
 If $(N,h)$ is almost entropic, $i\in N$ and $i'\not\in N$ then the polymatroid
 has an almost entropic extension $(i'\cup N,g)$ such that
 \[\begin{split}
    & g(i'\cup N\sm i)=g(N\sm i)\,,\\
    & g(i'\cup I)-g(i')=g(i\cup I)-g(i)\,,\quad I\pdm N\sm i\,.
    \end{split}
 \]
\end{lemma}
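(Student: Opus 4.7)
Plan: The plan is to realize $g$ as a limit of normalized entropy functions produced by the Ahlswede--K\"orner covering lemma, after first reducing the general almost entropic case to the entropic one by a continuity argument.

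\emph{Reduction to entropic $h$.} I would first show that it suffices to prove the lemma for $h$ that is itself an entropy function. Given $h \in \clHNent$, pick entropic $h_k \to h$ and, assuming the entropic case, obtain almost entropic extensions $g_k$ of $h_k$. The target identity $g_k(i' \cup (N\sm i)) = g_k(N\sm i)$ together with monotonicity yields $g_k(J \cup \{i'\}) \le g_k(N) + g_k(i') \le h_k(N) + h_k(N\sm i)$ for every $J \pdm N$, so $\{g_k\}$ lies in a compact subset of $\R^{\pomn{\{i'\}\cup N}}$. Pass to a convergent subsequence $g_k \to g$. Since \clHNent is closed and both identities are linear in the rank function, $g$ is the desired extension of $h$.

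\emph{Entropic case.} For entropic $h$ realized by $\xi = (\xi_j)_{j\in N}$, I would apply the Ahlswede--K\"orner covering lemma asymptotically. Take $n$ i.i.d.\ copies $\xi^n$ and, for $\epsilon > 0$, randomly bin the typical $\xi_{N\sm i}^n$-sequences into $2^{nR}$ bins with $R = I(\xi_i; \xi_{N\sm i}) + \epsilon$; let $T_{n,\epsilon}$ be the bin label. With high probability over the binning, $T_{n,\epsilon}$ is a function of $\xi_{N\sm i}^n$, $\tfrac{1}{n} H(T_{n,\epsilon}) \to R$, and, by the covering lemma, $\tfrac{1}{n} H(\xi_I^n \mid T_{n,\epsilon}) \to H(\xi_I \mid \xi_i)$ uniformly in $I \pdm N\sm i$. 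The normalized entropy function of $(\xi^n, T_{n,\epsilon})$ is entropic; letting first $n \to \infty$ and then $\epsilon \to 0$, these normalized entropies converge to a polymatroid $g$ extending $h$ and satisfying the two required identities. Closedness of \clHNent yields $g \in \clHNent$.

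\emph{Main obstacle.} The hardest step will be the covering argument, which must produce uniform estimates $\tfrac{1}{n} H(\xi_I^n \mid T_{n,\epsilon}) \to H(\xi_I\mid\xi_i)$ over all $I \pdm N\sm i$ simultaneously, not merely for $I = N\sm i$. This is classical but delicate. A cleaner polymatroidal route, equivalent in power by~\cite{Kaced.bal}, is to iterate selfadhesive extensions at $O = N\sm i$, use the convolution theorem~\cite{M.twocon} to keep every stage inside \clHNent, and then pass to a limit of the exchangeable copies of $i$; the Ahlswede--K\"orner variable $\xi_{i'}$ emerges as a de Finetti--style average of these copies, which is almost entropic by closedness.
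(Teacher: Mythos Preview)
Your primary approach via the Ahlswede--K\"orner covering lemma is the classical one from~\cite{MMRV} and is correct in outline; the uniform convergence $\tfrac{1}{n}H(\xi_I^n\mid T_{n,\epsilon})\to H(\xi_I\mid\xi_i)$ over all $I\pdm N\sm i$ is indeed what the covering lemma delivers, and the compactness reduction to the entropic case is fine (your bound is slightly loose---in fact $g_k(i'\cup N)=g_k(N)$ follows directly from the first identity and submodularity---but the conclusion stands).

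However, the paper's proof is both different and considerably shorter, and its point is precisely that the Ahlswede--K\"orner lemma is \emph{not needed}. The construction is a single polymatroidal step: since $h$ is almost entropic it is selfadhesive at $N\sm i$, so there is an almost entropic adhesive extension $(i'\cup N,f)$ of $h$ and its $\pi$-copy. One then sets $g=f_{L,t}^*$ with $L=\{i'\}$ and $t=h(N)-h(N\sm i)$, which is almost entropic by Lemma~\ref{L:aent} (convolution with a modular polymatroid followed by contraction). A direct evaluation of the minimum in~\eqref{E:ff}, using only submodularity and the adhesivity equation $f(i'\cup N)=2h(N)-h(N\sm i)$, yields the two identities explicitly. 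No limits, no i.i.d.\ blocks, no iteration.

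Your closing ``alternative'' gestures in the right direction (selfadhesivity plus convolution) but overcomplicates it: there is no need to iterate adhesive extensions, take limits of exchangeable copies, or invoke any de~Finetti-style averaging. One adhesive extension and one application of $f_{L,t}^*$ suffice. The gain of the paper's route is that it stays entirely within the polymatroidal toolkit developed in Section~\ref{S:defs}, making transparent the equivalence asserted in~\cite{Kaced.bal} between the AK-based and selfadhesivity-based techniques for non-Shannon inequalities.
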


\begin{proof}
 The assumption implies that there exists an almost entropic and adhesive
 extension $(i'\cup N,f)$ of $(N,h)$ and its copy at $N\sm i$. Let
 $g$ denote the contraction $f_{L,t}^*$ of the principal extension
 $f_{L,t}$ of $f$ on the singleton $L=i'$ with $t=h(N)-h(N\sm i)$.
 By Lemma~\ref{L:aent}, $g$ is almost entropic. The value $t$ is at
 most $h(i)=f(L)$ whence \eqref{E:ff} applies and takes the form
 \[
   g(I)=\min\big\{f(I),f(i'\cup I)-h(N)+h(N\sm i)\big\}\,,\qquad I\pdm i'\cup N\,.
 \]
 If $I\pdm N\sm i$ then, using the properties of $f$ and submodularity,
 \[\begin{split}
     g(I)&=\min\{h(I),h(i\cup I)-h(N)+h(N\sm i)\}=h(I)\,,\\
     g(i\cup I)&=\min\{h(i\cup I),f(i'\cup i\cup I)-f(i'\cup N)+f(N)\}
        =h(i\cup I)\,,\\
     g(i'\cup I)&=h(i\cup I)-h(N)+h(N\sm i)\,.
 \end{split}\]
 The first and second equation show that $g$ is an extension of $h$. This and the
 last one imply $g(i'\cup N\sm i)=g(N\sm i)$ and $g(i'\cup I)-g(i')=g(i\cup I)-g(i)$.
\end{proof}

 The main ingredient in the above proof is a contraction of a principal extension,
 which relies on convolution. This indicates that selfadhesivity, convolution
 and other constructions on polymatroids seem to be powerful enough to rephrase all
 existing approaches to proofs of the linear information-theoretic
 inequalities.

\section{Entropy region of four variables}\label{S:four-random}

 This section presents more special applications of polymatroidal
 constructions and consequences of Theorems \ref{T:tight} and \ref{T:ri.ent}
 when the ground set $N$ has four elements. Results on reduction of \clHNent
 will be used later when minimizing Ingleton score. It is assumed that the
 four elements $i,j,k,l$ of $N$ are always different. In the notation
 for cones the subscript $N$ is omitted, for example $\Hf =\HN$.

 When studying the entropic functions of four variables
 the crucial role is played by the expression
 \[
       h(ik) + h(jk) + h(il) + h(jl) +  h(kl)
                  - h(ij) - h(k) -  h(l) -  h(ikl) - h(jkl)
 \]
 where $h\in\Hf$. It is interpreted also as the scalar product $\stv{}{ij}h$ of
 \[
    \stv{}{ij}=\delta_{ik}+\delta_{i\ell}+\delta_{jk}+\delta_{j\ell}+\delta_{k\ell}
                -\delta_{ij}-\delta_{k}-\delta_{\ell}-\delta_{ik\ell}-\delta_{jk\ell}
 \]
 with $h$. The inequality $\stv{}{ij}h\geq0$ holds when $h$ is linear, see
 the works of Ingleton \cite{Ingleton.ineq, Ingleton.rep}. Let \HIng denote the
 polyhedral cone of the functions $h\in\Hf$ that satisfy the six instances of the
 Ingleton inequality obtained by the permutation symmetry. By \cite[Lemma~3]{M.4var.I},
 \HIng has dimension $15$ and is generated by linear polymatroidal rank functions.
 Therefore, the functions from \ri{\HIng} are entropic due to~\cite[Theorem~1]{M.twocon}.

 By \cite[Lemma~4]{M.4var.I}, any $h\in\Hf\sm\HIng$ violates exactly one
 of the six Ingleton inequalities. Let $\HIvij$ denote the cone of
 functions $h\in\Hf$ with $\stv{h}{ij}\leq0$. It follows that \Hf is union
 of \HIng with the six cones $\HIvij,\ldots,\HIngvio{\scriptscriptstyle(kl)}$. Focusing
 primarily on $\clHfent$, it contains $\HIng$ and is contained in the union.
 By symmetry, it remains to study $\clHfent\cap\HIvij$.

 Let $\Lf{ij}$ denote the cone $\clHfentti\cap\HIvij$ of tight and almost entropic
 polymatroids $h$ that satisfy the reversed Ingleton inequality $\stv{}{ij}h\leq0$.

\begin{corollary}\label{C:4tight+mod}
  $\clHfent\cap\HIvij=\Lf{ij}\oplus\Hfmod\,$.
\end{corollary}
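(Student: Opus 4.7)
My plan is to reduce the statement to a combination of Corollary~\ref{C:tight+mod} with a single observation about the Ingleton expression: the linear functional $h\mapsto\stv{h}{ij}$ vanishes on modular polymatroids. Once this is established, the claimed decomposition follows almost mechanically from the unique decomposition $h=\hti+\hmod$ already available on $\clHfent$.

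First I would check that $\stv{g}{ij}=0$ for every modular $g\in\Hfmod$. This is a short linear algebra computation: substituting $g(I)=\sum_{i\in I}g(i)$ into
\[
    \stv{g}{ij}=g(ik)+g(i\ell)+g(jk)+g(j\ell)+g(k\ell)
                -g(ij)-g(k)-g(\ell)-g(ik\ell)-g(jk\ell)
\]
the coefficients of $g(i),g(j),g(k),g(\ell)$ all cancel. (Conceptually, this is the familiar fact that Ingleton's expression is a balanced combination of conditional mutual informations, each of which vanishes on stochastically independent variables.)

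Next, for the inclusion $\subseteq$, I would take $h\in\clHfent\cap\HIvij$ and use Corollary~\ref{C:tight+mod} to write $h=\hti+\hmod$ with $\hti\in\clHfentti$ and $\hmod\in\Hfmod$. By linearity of $\stv{}{ij}$ and the vanishing established above,
\[
    \stv{\hti}{ij}=\stv{h}{ij}-\stv{\hmod}{ij}=\stv{h}{ij}\leq0\,,
\]
so $\hti\in\HIvij$ and hence $\hti\in\clHfentti\cap\HIvij=\Lf{ij}$. For the reverse inclusion $\supseteq$, any $h_1+h_2$ with $h_1\in\Lf{ij}$ and $h_2\in\Hfmod$ lies in $\clHfentti+\Hfmod\subseteq\clHfent$ by Corollary~\ref{C:tight+mod}, and satisfies $\stv{h_1+h_2}{ij}=\stv{h_1}{ij}\leq0$, so it lies in $\HIvij$ too.

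Finally, the sum is direct because $\Lf{ij}-\Lf{ij}\subseteq\Hfti-\Hfti$ and $\Hfmod-\Hfmod$ meet only at the origin, which is exactly the directness asserted in $\Hf=\Hfti\oplus\Hfmod$ and already invoked in Corollary~\ref{C:tight+mod}. I do not foresee a serious obstacle; the only substantive point is the vanishing of $\stv{}{ij}$ on $\Hfmod$, and this is a routine coefficient check.
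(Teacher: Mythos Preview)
Your proposal is correct and follows essentially the same approach as the paper: the paper's proof observes that $\stv{}{ij}$ is balanced (hence vanishes on $\Hfmod$, so $\stv{h}{ij}=\stv{\hti}{ij}$ and $\Hfmod\subseteq\HIvij$) and then invokes Corollary~\ref{C:tight+mod}, which is exactly your argument spelled out in more detail. The only cosmetic difference is that the paper cites the balancedness of $\stv{}{ij}$ rather than verifying the cancellation by hand.
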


\begin{proof}
 Since the expression $\stv{}{ij}h$ is balanced, $\stv{}{ij}h=\stv{}{ij}\hti$
 and $\Hfmod$ is contained in $\HIvij$. This and Corollary~\ref{C:tight+mod}
 imply the decomposition.
\end{proof}

 The study of \clHfent thus reduces to that of $\Lf{ij}$.
 This cone is contained in $\HIvij\cap\Hfti$ which is known to be the
 conic hull of $11$ linearly independent polymatroidal rank functions
 \cite[Lemma~6.1]{M.4var.I}. The most notable one
 \begin{equation}\label{E:nonaent}
    \bar{r}_{ij}(K)=\begin{cases}\; 3\,,   &       K\in\{ ik, jk, il, jl, kl \},\\
                           \; \min  \{ 4, 2 |K|\}\,, &  \text{otherwise}
              \end{cases}
 \end{equation}
 is not almost entropic by Zhang-Yeung inequality \cite{ZhY.ineq.cond}. The
 remaining ones are matroidal
 \begin{equation}\label{E:linmat}
    \text{
            $r_{1}^\pmn$,\,\,\,\,\,
            $r_{3}^\pmn$,\,\,\,\,\,
            $r_{1}^{i}$, $r_{1}^{j}$,\,\,\,\,\,
            $r_{2}^{k}$, $r_{2}^{l}$,\,\,\,\,\,
            $r_{1}^{ik}$, $r_{1}^{jk}$, $r_{1}^{il}$, $r_{1}^{jl}$
         }
 \end{equation}
 where the matroids are uniform up to loops. Recall that
 the subindex denotes the rank and the superindex the set of loops.
 By the proof of \cite[Lemma~6.1]{M.4var.I}, every $g\in\HIvij\cap\Hfti$
 is the unique conic combination of the rank functions from \eqref{E:nonaent}
 and \eqref{E:linmat},
 \begin{equation}\label{E:comb}
    \begin{split}
    g=&-(\stv{g}{ij})\,\bar{r}_{ij}
        + (\trn{g}{ij|\emptyset})\,r_1  + (\trn{g}{kl|ij})\,r_3\\
      &\qquad +(\trn{g}{kl|i})\,r_{1}^{i} + (\trn{g}{kl|j})\,r_{1}^{j}
      + (\trn{g}{ij|k})\,r_{2}^{l}         + (\trn{g}{ij|l})\,r_{2}^{k}\\
      &\qquad +(\trn{g}{jl|k})\,r_{1}^{ik}  + (\trn{g}{il|k})\,r_{1}^{jk}
        + (\trn{g}{jk|l})\,r_{1}^{il} + (\trn{g}{ik|l})\,r_{1}^{jl}
    \end{split}
 \end{equation}
 identifying explicitly the coordinate functionals.

 Since the matroids in \eqref{E:linmat} are linear and there exists an
 entropic point violating Ingleton inequality, the dimension of $\Lf{ij}$
 is $11$, the same as that of \clHfentti or $\HIvij\cap\Hfti$.
 Theorem~\ref{T:ri.ent} has the following consequence.

\begin{corollary}\label{C:4ri.ent}
    $\ri{\Lf{ij}}\pdm\Hfent\,$.
\end{corollary}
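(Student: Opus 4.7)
The plan is to deduce the corollary directly from Theorem~\ref{T:ri.ent} via a dimension-match argument. The crucial input is the observation already recorded just before the statement: the cone $\Lf{ij}=\clHfentti\cap\HIvij$ has dimension~$11$, the same as $\clHfentti$. Since $\Lf{ij}\pdm\clHfentti$ are convex cones of equal dimension, their linear (and hence affine) spans must coincide.

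Next I would show that $\ri{\Lf{ij}}\pdm\ri{\clHfentti}$. Given any $h\in\ri{\Lf{ij}}$, there is a ball around $h$ in $\mathrm{aff}(\Lf{ij})=\mathrm{aff}(\clHfentti)$ that lies in $\Lf{ij}$, and therefore in $\clHfentti$; this places $h$ in the relative interior of $\clHfentti$. An application of Theorem~\ref{T:ri.ent} with the modular summand set to zero then yields $\ri{\clHfentti}\pdm\ri{\clHfentti}\oplus\Hfmod\pdm\Hfent$, and chaining the two inclusions delivers $\ri{\Lf{ij}}\pdm\Hfent$ as required.

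There is no real obstacle: the argument is essentially dimensional bookkeeping layered on top of the general theorem. It is worth emphasizing, however, that the dimensional equality between $\Lf{ij}$ and $\clHfentti$ itself rests on the existence of an entropic point genuinely violating $\stv{}{ij}h\geq0$ — without such a point, $\Lf{ij}$ would collapse to a lower-dimensional face of $\clHfentti$ lying in the hyperplane $\stv{}{ij}h=0$, and the above ball-in-affine-hull step would fail. That input has, however, already been secured in the preceding discussion via the explicit combination recorded in~\eqref{E:comb}.
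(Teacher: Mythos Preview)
Your argument is correct and matches the paper's intended reasoning: the paper gives no explicit proof, merely noting that the corollary follows from Theorem~\ref{T:ri.ent} once $\Lf{ij}$ and $\clHfentti$ are known to share dimension~$11$, and you have spelled out exactly that dimension/affine-hull step. One minor attribution point: the full-dimensionality of $\Lf{ij}$ rests on the existence of an entropic Ingleton-violator (stated in the text) together with the linearity of the matroids in~\eqref{E:linmat}, rather than on~\eqref{E:comb} per se, but this does not affect the validity of your proof.
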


\medskip
 The remaining part of this section focuses on faces of $\Lf{ij}$.
 The face $\Ff{ij}$ given by the equalities $\trn{g}{ij|\emptyset}=0$
 and $\trn{g}{kl|ij}=0$ plays a special role later, in particular when
 optimizing Ingleton score.

 Let $A_{i,j}$ and $B_{ij,k}$ be the linear mappings defined
 on $\R^{\pomn{N}}$ by
 \[
    A_{i,j}g=g+(\trn{g}{ij|\emptyset})\,(r_{1}^{i}-r_1)\quad\text{and}\quad
    B_{ij,k}g=g+(\trn{g}{kl|ij})\,(r_{2}^{k}-r_3)\,.
 \]

\begin{lemma}\label{L:AB}
    The mappings $A_{i,j}$ and $B_{ij,k}$ commute, leave invariant the
    hyperplanes given by $\trn{g}{ij|\emptyset}=0$ and $\trn{g}{kl|ij}=0$,
    respectively, $A_{i,j}$ maps into the first hyperplane, $B_{ij,k}$
    into the second one, and
    \[
        \stv{g}{ij}=\stv{(A_{i,j}g)}{ij}=\stv{(B_{ij,k}g)}{ij}\,,
        \qquad g\in\R^{\pomn{N}}\,.
    \]
\end{lemma}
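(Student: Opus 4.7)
The plan is to exploit the fact that each of $A_{i,j}$ and $B_{ij,k}$ has the form $g\mapsto g+\phi(g)\,v$ for a linear functional $\phi$ on $\R^{\pomn{N}}$ and a fixed vector $v\in\R^{\pomn{N}}$. By linearity, every assertion in the lemma reduces to the action of the three functionals $\trn{}{ij|\emptyset}$, $\trn{}{kl|ij}$ and $\stv{}{ij}$ on the two ``building'' vectors $v_A=r_1^{i}-r_1$ and $v_B=r_2^{k}-r_3$.

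My first step is to compute the six scalar products
\[
   \trn{v_A}{ij|\emptyset},\;\; \trn{v_B}{kl|ij},\;\;
   \trn{v_A}{kl|ij},\;\; \trn{v_B}{ij|\emptyset},\;\;
   \stv{v_A}{ij},\;\; \stv{v_B}{ij}.
\]
Each is a finite sum of at most ten values of the explicit matroidal rank functions, read off from $r_m^{J}(I)=\min\{m,|I\sm J|\}$ with $N$ of size four. I expect them to evaluate to $-1,\;-1,\;0,\;0,\;0,\;0$, respectively.

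Granted these six identities, the rest of the lemma is mechanical. The values $-1$ in the first two give $\trn{(A_{i,j}g)}{ij|\emptyset}=\trn{g}{ij|\emptyset}\,(1+\trn{v_A}{ij|\emptyset})=0$ and, analogously, $\trn{(B_{ij,k}g)}{kl|ij}=0$ for every $g$; hence $A_{i,j}$ and $B_{ij,k}$ map $\R^{\pomn{N}}$ into the stated hyperplanes, and on those hyperplanes the correction term vanishes so both maps act as the identity there, which yields invariance. The vanishing of the cross pairings $\trn{v_A}{kl|ij}$ and $\trn{v_B}{ij|\emptyset}$ makes the two compositions $A_{i,j}B_{ij,k}g$ and $B_{ij,k}A_{i,j}g$ collapse to the common symmetric expression $g+\trn{g}{ij|\emptyset}\,v_A+\trn{g}{kl|ij}\,v_B$, establishing commutativity. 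Finally, $\stv{v_A}{ij}=\stv{v_B}{ij}=0$ gives $\stv{(A_{i,j}g)}{ij}=\stv{(B_{ij,k}g)}{ij}=\stv{g}{ij}$.

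There is no serious obstacle; the entire argument is a handful of evaluations on the subsets of the four-element set $N$. The only care needed is correct bookkeeping of $r_m^{J}(I)$ on the specific subsets appearing in each functional, and checking that the ``symmetry-breaking'' introduced by $v_A$ in the pair $\{i,j\}$ and by $v_B$ in the pair $\{k,l\}$ is annihilated by the functionals it is not paired with, as well as by the Ingleton covector $\stv{}{ij}$. Once the six numerical identities are in hand, the lemma is a one-line consequence of linearity.
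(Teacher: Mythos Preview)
Your approach is correct and is exactly what the paper has in mind: the paper writes ``A simple proof is omitted'' and merely records the composed identity
\[
   A_{i,j}B_{ij,k}g=B_{ij,k}A_{i,j}g
      =g+(\trn{g}{ij|\emptyset})\,(r_{1}^{i}-r_1)+(\trn{g}{kl|ij})\,(r_{2}^{k}-r_3),
\]
which is your symmetric expression $g+\trn{g}{ij|\emptyset}\,v_A+\trn{g}{kl|ij}\,v_B$. Your six scalar products indeed come out as $-1,-1,0,0,0,0$ (noting that $v_A=-\delta_{\{i\}}$, while $v_B$ equals $-1$ precisely on the sets containing $k$ and on $N\sm k$), and from there each claim of the lemma follows by the linearity argument you outline.
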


\noindent
 A simple proof is omitted, for example
 \begin{equation}\label{E:ABg}
    A_{i,j}B_{ij,k}g=B_{ij,k}A_{i,j}g
      =g+(\trn{g}{ij|\emptyset})\,(r_{1}^{i}-r_1)+(\trn{g}{kl|ij})\,(r_{2}^{k}-r_3)\,.
 \end{equation}
 Both $A_{i,j}$ and $B_{ij,k}$ change at most two coordinates in~\eqref{E:comb}.

\begin{theorem}\label{T:face}   
  $A_{i,j}B_{ij,k}\,\Lf{ij}=\Ff{ij}\,$.
\end{theorem}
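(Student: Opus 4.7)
The inclusion $\supseteq$ is immediate: for $g\in\Ff{ij}$ one has $\trn{g}{ij|\emptyset}=0=\trn{g}{kl|ij}$, so $A_{i,j}g=g=B_{ij,k}g$ and $g$ is its own image. For the nontrivial direction, fix $g\in\Lf{ij}$ and set $g'=A_{i,j}B_{ij,k}g$. Lemma~\ref{L:AB} already yields $\trn{g'}{ij|\emptyset}=0=\trn{g'}{kl|ij}$ and $\stv{g'}{ij}=\stv{g}{ij}\leq0$. Tightness survives because $r_{1}^{i}-r_1$ is supported on the singleton $\{i\}$, hence invisible to the expressions $g(N)-g(N\setminus s)$, while $r_{2}^{k}-r_3$ takes the same value $-1$ at $N$ and at every $N\setminus s$, so $B_{ij,k}$ shifts $g'(N)$ and $g'(N\setminus s)$ by the same amount. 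The heart of the proof is the almost-entropic claim $g'\in\clHfent$.

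Since the two transformations commute (Lemma~\ref{L:AB}), I would apply $A_{i,j}$ first. Put $\alpha=\trn{g}{ij|\emptyset}$ and $t=g(i)-\alpha=g(ij)-g(j)$, and pick a modular polymatroid $m$ with $m(i)=t$ and $m(s)$ large for $s\neq i$. Corollary~\ref{C:cut} realizes $A_{i,j}g$ as the convolution $g\ast m$ once one verifies $\trn{g}{is|\emptyset}\geq\alpha$ for $s\in\{k,l\}$. Expanding both sides through the decomposition~\eqref{E:comb}, the difference $\trn{g}{is|\emptyset}-\alpha$ becomes a nonnegative combination of other coordinates together with the nonnegative quantity $-\stv{g}{ij}$ supplied by the reversed Ingleton hypothesis $g\in\HIvij$. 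The convolution theorem \cite[Theorem~2]{M.twocon} then places $h:=A_{i,j}g=g\ast m$ in $\clHfent$; since $A_{i,j}$ also preserves tightness and reversed Ingleton, $h\in\Lf{ij}$.

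For the second transformation, I would realize $B_{ij,k}h$ as the contraction $h_{k,\beta}^{\ast}$ of the principal extension of $h$ on the singleton $L=k$ with value $\beta:=\trn{h}{kl|ij}$. Tightness of $h$ puts us in the situation of Remark~\ref{R:qwe}, so the hypothesis of Lemma~\ref{L:priext+contr2} collapses to $\max\{h(il),h(jl)\}\leq h(ij)$, an inequality I intend to verify on each of the eleven generators listed in \eqref{E:nonaent} and \eqref{E:linmat}. By Lemma~\ref{L:aent}, $h_{k,\beta}^{\ast}\in\clHfent$. What remains is the coordinate match $h_{k,\beta}^{\ast}=B_{ij,k}h$: both vectors coincide with $h$ outside the subsets containing $k$ and the set $ijl$, so any discrepancy can arise only at some $I\subsetneq ijl$ with $\trn{h}{k|I}=0$, at which $h_{k,\beta}^{\ast}$ would subtract an extra $\beta$ while $B_{ij,k}h$ does not. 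However, for every such $I$, expanding $\trn{h}{k|I}$ via \eqref{E:comb} exhibits it as a nonnegative combination containing $\trn{h}{kl|ij}=\beta$ as one of the summands, so $\trn{h}{k|I}=0$ forces $\beta=0$ and the discrepancy is empty.

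The principal obstacle is precisely this last bookkeeping. Although Corollary~\ref{C:cut}, Lemmas~\ref{L:priext+contr2} and~\ref{L:aent}, and \cite[Theorem~2]{M.twocon} shoulder the conceptual weight, reconciling the outputs of these constructions with the two explicit linear maps $A_{i,j}$ and $B_{ij,k}$ requires repeated appeals to~\eqref{E:comb} and uses the reversed Ingleton inequality $-\stv{g}{ij}\geq0$ to absorb the one non-almost-entropic generator $\bar{r}_{ij}$ into the nonnegative combinations that drive every inequality.
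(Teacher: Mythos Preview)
Your proposal is correct and follows the same architecture as the paper: $A_{i,j}$ is realized via the convolution of Corollary~\ref{C:cut} and $B_{ij,k}$ via the contracted principal extension of Lemmas~\ref{L:priext+contr2} and~\ref{L:aent}, with Remark~\ref{R:qwe} reducing the hypothesis to $h(ij)\geq\max\{h(il),h(jl)\}$. The only difference is in how the auxiliary inequalities are checked: where you expand through the decomposition~\eqref{E:comb} or inspect the eleven generators, the paper instead writes three short identities expressing $\stv{}{ij}$ as a sum of three submodularity terms minus one, so that the reversed Ingleton hypothesis immediately gives $f(ij)-f(j)\geq f(is)-f(s)$ for $s\in\{k,l\}$ and $h(ij)\geq h(il),h(jl),h(kl)$; the latter makes $h(ij)$ maximal among all two-element values and pins down the set $\{I:k\in\cl{I}\}$ at once when $t>0$, replacing your case analysis for the coordinate match $h_{k,\beta}^{*}=B_{ij,k}h$.
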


\begin{proof}
 The hyperplanes given by $\trn{g}{ij|\emptyset}=0$ and $\trn{g}{kl|ij}=0$
 peal out two facets of $\HIvij\cap\Hfti$, due to \eqref{E:comb}.
 By Lemma~\ref{L:AB},  $A_{i,j}B_{ij,k}$ maps $\HIvij\cap\Hfti$
 onto the intersection of the two facets. Since $\Lf{ij}$ is equal to
 $\HIvij\cap\Hfti\cap\clHfent$ it suffices to prove that both
 $A_{i,j}$ and $B_{ij,k}$ map $\Lf{ij}$ into \clHfent.

 By the identity
 \[
    \stv{}{ij}  =\trn{}{ij|k}+\trn{}{ik|l}+\trn{}{kl|j}-\trn{}{ik|j}\,,   
 \]
 if $f\in\HIvij$ then $\trn{f}{ik|j}\geq\trn{f}{ij|k}$, thus
 $f(ij)-f(j)\geq f(ik)-f(k)$. By symmetry, $f(ij)-f(j)\geq f(il)-f(l)$.
 In turn, Corollary~\ref{C:cut} can be applied to~$t=h(ij)-h(j)$,
 and provides $h\in\Hf$ that coincides with $f$ except at~$i$
 where $h(i)=f(ij)-f(j)$. Similarly, the rank functions $r^i_1$ and $r_1$
 differ only at $i$ and $r^i_1(i)-r_1(i)=-1$. It follows that $h=A_{i,j}f$.
 If additionally $f\in\clHfent$ then $h$, being the convolution of $f$ with
 a modular polymatroid, is almost entropic. Therefore, $f\in\Lf{ij}$ implies
 $A_{i,j}f\in\clHfent$.

 By the identity
 \[
    \stv{}{ij}=\trn{}{ij|k}+\trn{}{ik|l}+\trn{}{kl|ij}-\trn{}{ik|jl} \,,   
 \]
 if $f\in\HIvij$ then $\trn{f}{ik|jl}\geq\trn{f}{kl|ij}$. Additionally,
 if $f$ is tight this inequality rewrites to $f(ij)\geq f(jl)$. By symmetry,
 $f(ij)\geq f(il)$. It follows that \eqref{E:aaa} is valid for $t=f(N)-f(ij)$.
 By Remark~\ref{R:qwe} and $t\leq f(k)$, Lemma~\ref{L:priext+contr2} is applied
 with $L=k$ and provides $h=f^*_{k,t}$. This rank function differs from $f$
 by $t$ on the sets $I\pdm N$ with $k\in\cl{I}$. By the identity
 \[
    \stv{}{ij}  =\trn{}{ij|k}+\trn{}{ij|l}+\trn{}{kl|ij}-\trn{}{ij|kl}\,,      
 \]
 $\trn{f}{ij|kl}\geq\trn{f}{kl|ij}$. Since $f$ is tight the inequality rewrites
 to $f(ij)\geq f(kl)$. By symmetry, $f(ij)$ is maximal among all $f(J)$ with
 $|J|=2$. Therefore, if $t>0$ then $k\in\cl{I}$ is equivalent to $I\ni k$ or
 $I=N\sm k$. These are exactly the cases when $r^k_2$ and $r_3$ differ, and
 $r^k_2(I)-r_3(I)=-1$. It follows from $t=\trn{f}{kl|ij}$ that $h=B_{ij,k}f$.
 If, additionally, $f\in\clHfent$ then $h$ is almost entropic. Therefore,
 $f\in\Lf{ij}$ implies $B_{ij,k}f\in\Lf{ij}$.
\end{proof}

\begin{remark}\label{R:sppoint}
 Let $\Ef{ij}$ be the face of $\Lf{ij}$ given by the equalities
 \[
    \text{$\trn{g}{ij|k}=0$, $\trn{g}{ij|l}=0$, $\trn{g}{kl|i}=0$,
          $\trn{g}{kl|j}=0$ and $\trn{g}{kl|ij}=0$.}
 \]
 In \cite[Example 2]{M.4var.II}, four random variables are constructed such
 that their entropy function $g$ satisfies the above five constraints,
 $\stv{g}{ij}<0$, each of $\trn{g}{ij|\emptyset}$, $\trn{g}{jl|k}$,
 $\trn{g}{il|k}$, $\trn{g}{jk|l}$ and $\trn{g}{ik|l}$ is positive, and
 $g$ is not tight. The lack of tightness makes $g$ to be outside $\Lf{ij}$.
 Nevertheless, Theorem~\ref{T:tight} implies that $\gti\in\clHfent$ whence
 $\gti$ belongs to the face $\Ef{ij}$, even more, it belongs to its relative
 interior. At the same time, \cite[Theorem~4.1]{M.4var.III} implies that no
 point of $\ri{\Ef{ij}}$ is entropic. This phenomenon can be equivalently
 rephrased in terms of conditional information inequalities, studied recently
 in~\cite{KacRo,KacRo.non,KacRo.aent}.
\end{remark}

\section{Symmetrization of $\Ff{ij}$\label{S:symmetrized}}

 As before, the ground set $N$ is assumed to have four elements
 $i,j,k,l$, which are always different. In the previous section the study
 of \clHNent was reduced to that of $\Lf{ij}$, and a particular face
 $\Ff{ij}$ of the latter was identified. Here, a symmetrization of $\Ff{ij}$
 is described and its cross-section visualized, owing to numerical computer
 experiments.

 The expression $\stv{}{ij}$ and the cones $\Lf{ij}$ and $\Ff{ij}$ enjoy
 natural symmetries. Namely, if a permutation $\pi$ on~$N$ stabilizes
 the two-element set $ij$ then $\stv{h}{ij}=\stv{h_\pi}{ij}$, $h\in\Hf$.
 Hence $\Lf{ij}$ and $\Ff{ij}$ are closed to the action $h\mapsto h_\pi$.

 Let $C_{ij}$ be the linear mapping on $\R^{\pomn{N}}$ given by
 \[
    C_{ij}h\triangleq |G_{ij}|^{-1} \tsum_{\pi\in G_{ij}}\, h_\pi
 \]
 where $G_{ij}$ denotes the stabilizer of~$ij$, consisting of four permutations.
 By the decomposition \eqref{E:comb}, if $h\in\HIvij\cap\Hfti$ then
 \[\begin{split}
    C_{ij}h=&-(\stv{h}{ij})\,\bar{r}_{ij}
        + (\trn{h}{ij|\emptyset})\,r_1^\pmn  + (\trn{h}{kl|ij})\,r_3^\pmn\\
      &\qquad +\tfrac12\big[\trn{h}{kl|i} + \trn{h}{kl|j}\big] [r_{1}^{j}+r_{1}^{i}]\\
      &\qquad +\tfrac12\big[\trn{h}{ij|k} + \trn{h}{ij|l}\big] [r_{2}^{l}+r_{2}^{k}]\\
      &\qquad +\tfrac14\big[\trn{h}{jl|k}  + \trn{h}{il|k} + \trn{h}{jk|l} +\trn{h}{ik|l}\big][r_{1}^{ik}+r_{1}^{jk}+r_{1}^{il}+r_{1}^{jl}]\,.
    \end{split}
 \]
 It follows that $C_{ij}\Lf{ij}$ has dimension $6$ and $C_{ij}\Ff{ij}$
 is a face of dimension~$4$. The cross-section
 \[
    \Sec_{ij}\triangleq\{h\in C_{ij}\Ff{ij}\colon h(N)=1\}
 \]
 is three-dimensional. By \eqref{E:comb}, for $h\in\Sec_{ij}$
 \[\begin{split}
    1=h(N)=&\big[-4\stv{h}{ij}\big] + \big[\trn{h}{kl|i} + \trn{h}{kl|j}\big]
         + \big[2\trn{h}{ij|k}+2\trn{h}{ij|l}\big]\\
     & +\big[\trn{h}{jl|k}+\trn{h}{il|k}+\trn{h}{jk|l}+\trn{h}{ik|l}\big]\,.
    \end{split}
 \]
 Denoting by $\bar{\alpha}_h$, $\bar{\beta}_h$, $\bar{\gamma}_h$ and $\bar{\delta}_h$
 the above brackets, respectively, any function $h\in\Sec_{ij}$ takes the form
 \[
    h=\bar{\alpha}_h\,\tfrac14\bar{r}_{ij}+ \bar{\beta}_h\,\tfrac12[r_{1}^{j}+r_{1}^{i}]
            +\bar{\gamma}_h\,\tfrac14 [r_{2}^{l}+r_{2}^{k}]
            +\bar{\delta}_h \tfrac14 [r_{1}^{ik}+r_{1}^{jk}+r_{1}^{il}+r_{1}^{jl}]\,.
 \]
 Here, $\bar{\alpha}_h$, $\bar{\beta}_h$, $\bar{\gamma}_h$ and $\bar{\delta}_h$
 are nonnegative and sum to one. Further, $\bal=\tfrac14\bar{r}_{ij}$,
 $\bbe=\tfrac12[r_{1}^{j}+r_{1}^{i}]$, $\bga=\tfrac14 [r_{2}^{l}+r_{2}^{k}]$ and
 $\bde=\tfrac14 [r_{1}^{ik}+r_{1}^{jk}+r_{1}^{il}+r_{1}^{jl}]$ are linearly independent
 polymatroidal rank functions from $\Sec_{ij}$. It follows that $\Sec_{ij}$ is
 a closed convex subset of the tetrahedron with the vertices $\bal$, $\bbe$, $\bga$
 and $\bde$. Since the points $h$ having $\bar{\alpha}_h=0$ are almost entropic and
 $\bar{r}_{ij}$ is not, $\Sec_{ij}$ contains the triangle $\bbe\bga\bde$
 but not the vertex~$\bal$.

\begin{figure}[h!tb]
\begin{center}
\newsavebox\imagebox
\sbox\imagebox{\scalebox{0.0525}{\includegraphics{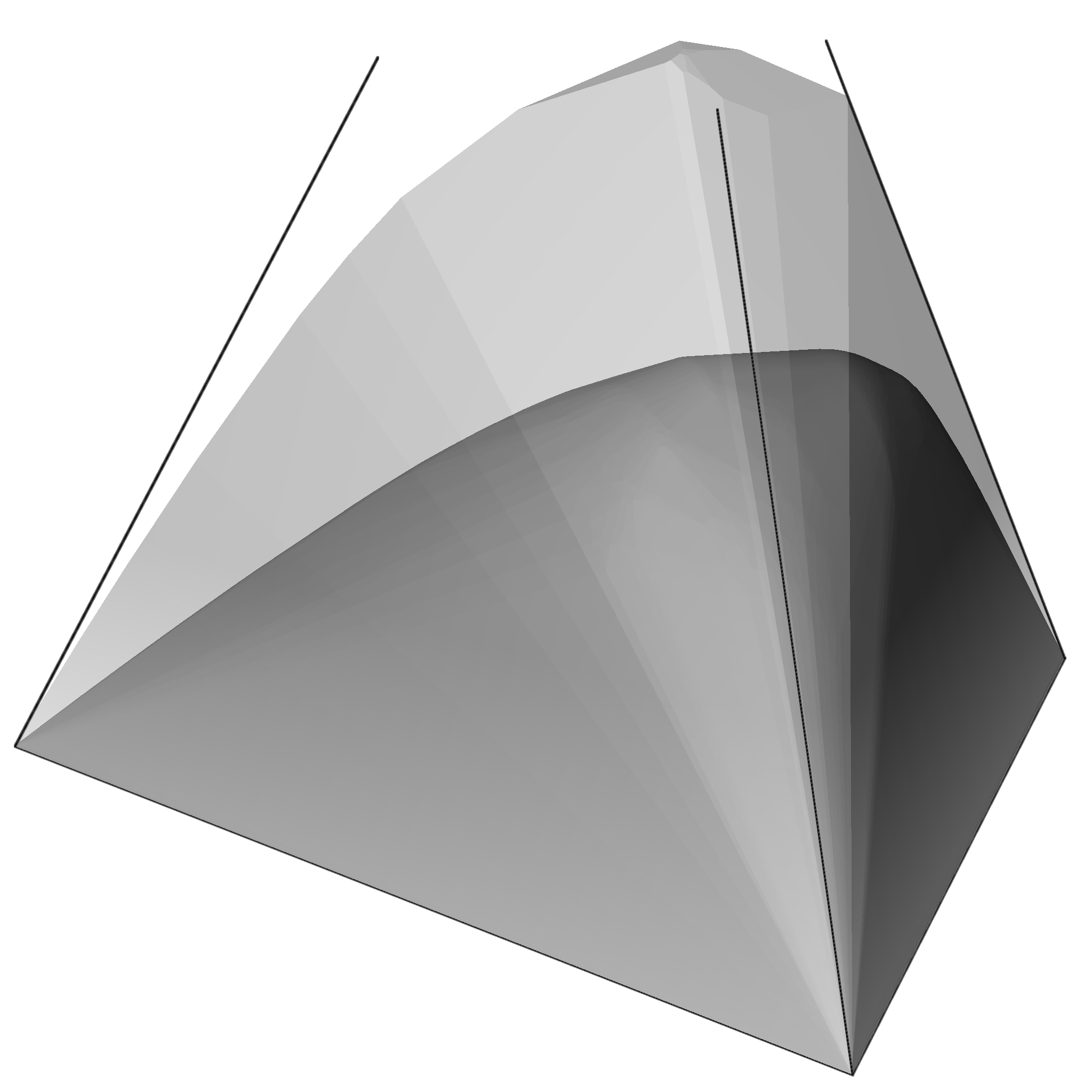}}}%
\setlength\unitlength{\ht\imagebox}%
\scalebox{0.0525}{\includegraphics{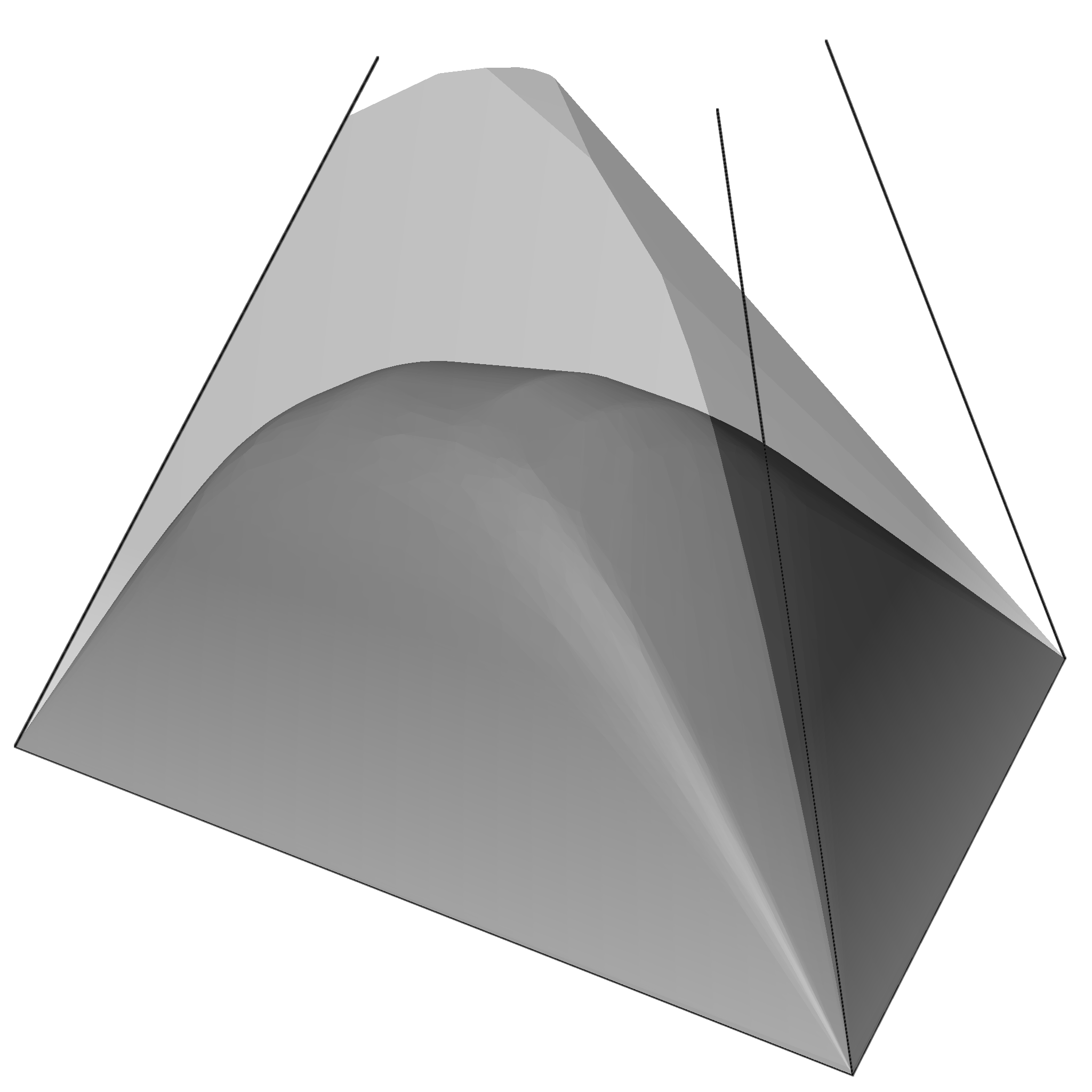}}\begin{picture}(0,0)(0,0)
\put(-1.0,0.205){\makebox{$\bde$}}
\put(-0.26,-0.09){\makebox{$\bbe$}}
\put(-0.04,0.27){\makebox{$\bga$}}
\put(-0.59,0.66){\circle{0.03}}
\put(-0.488,0.65){\circle*{0.03}}
\end{picture}%
\hfill
\usebox\imagebox\begin{picture}(0,0)(0,0)
\put(-1.02,0.212){\makebox{$\bbe$}}
\put(-0.26,-0.06){\makebox{$\bga$}}
\put(-0.04,0.255){\makebox{$\bde$}}
\put(-0.35,0.665){\circle*{0.03}}
\put(-0.23,0.66){\circle{0.03}}
\end{picture}%
\hfill
\scalebox{0.0525}{\includegraphics{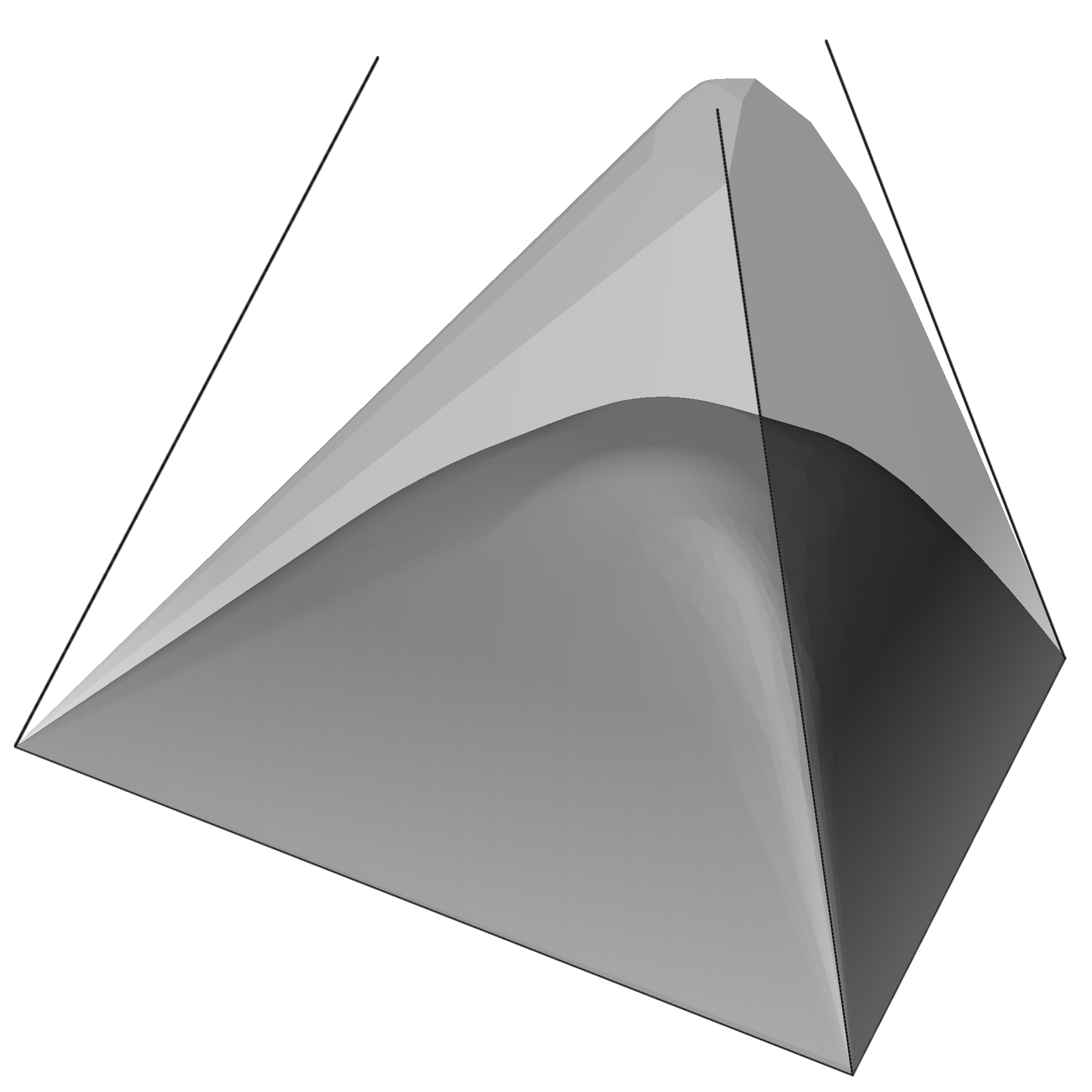}}%
\begin{picture}(0,0)(0,0)
\put(-1.0,0.232){\makebox{$\bga$}}
\put(-0.26,-0.09){\makebox{$\bde$}}
\put(-0.05,0.26){\makebox{$\bbe$}}
\put(-0.37,0.63){\circle{0.03}}
\put(-0.385,0.58){\circle*{0.03}}
\end{picture}%
\end{center}
\caption{Inner and outer approximations of $\Sec_{ij}$.}\label{fig:comp}
\end{figure}

 Computer experiments were run to visualize $\Sec_{ij}$. Involved random
 variables were limited to take at most $11$ values. Various maximization
 procedures were run numerically over the distributions of four tuples
 of random variables. The corresponding entropy functions $f$ were
 transformed to $g=C_{ij} A_{i,j}B_{ij,k}\fti$ and then to
 \mbox{$h=g/g(N)$}, which is the convex combination
 \[
    h=\bar{\alpha}_h\bal+\bar{\beta_h}\bbe+\bar{\gamma}_h\bga
        +\bar{\delta}_h\bde \in \Sec_{ij}\,.
 \]
 The procedures maximized $\bar{\alpha}_h$ in various directions, over
 the distributions. Different methods and strategies were employed,
 including also randomized search. In this way, over $5$ million points
 from $\Sec_{ij}$ have been generated.

 In Figure~\ref{fig:comp}, the convex hull of these points is depicted
 as a dark gray region from three different perspectives. In the images, the
 vertex $\bal$ is missing and the straight lines are the incomplete edges of the
 tetrahedron incident to~$\bal$. The dark gray region is spanned by about $2200$
 extreme points. The projections of the extreme points from $\bal$ to $\bbe\bga\bde$
 do not exhaust the triangle uniformly, see Figure~\ref{fig:proextpoints}. This
 explains the lack of smoothness of the dark gray region. The two extreme points of
 the dark gray region depicted in Figure~\ref{fig:comp} are discussed in Section~\ref{S:4-atom}.

\begin{figure}[h!tb]
    \begin{center}\begin{tikzpicture}[scale=0.45]
                    \input{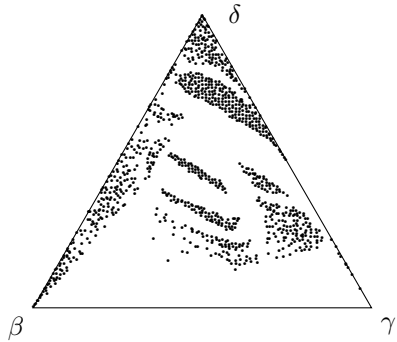}
                  \end{tikzpicture}
    \end{center}
\caption{Extreme points of the dark gray region, projected to $\bbe\bga\bde$.}
    \label{fig:proextpoints}
\end{figure}

 The light gray region in Figure~\ref{fig:comp} visualizes an outer
 approximation of $\Sec_{ij}$ which was constructed from hundreds of known
 non-Shannon information inequalities, mostly from those of \cite{DFZ.nonSh}.
 Details are omitted. The gap between the approximations is large.

\begin{figure}[h!tb]
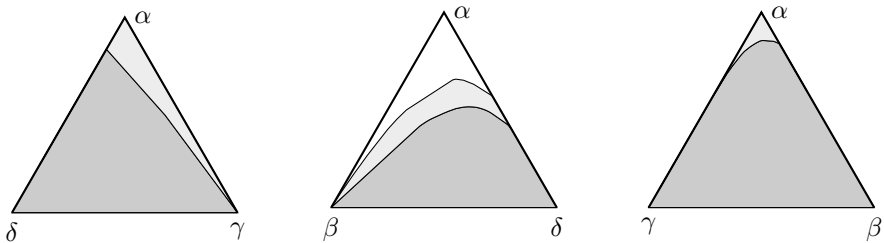

\begin{center}\hbox to \textwidth{\hss
\begin{tikzpicture}[scale=0.3]
\input{dbc.tex}
\draw[thick] (10,0) node[below]{$\gamma$}--(5,8.660254) node[right]{$\alpha$}--(0,0) node[below] {$\delta$};
\draw(0,0)--(10,0);
\end{tikzpicture}\rule{8mm}{0mm}%
\begin{tikzpicture}[scale=0.3]
\input{bcd.tex}
\draw[thick](10,0) node[below]{$\delta$}--(5,8.660254) node[right]{$\alpha$}--(0,0) node[below]{$\beta$};
\draw(0,0)--(10,0);
\end{tikzpicture}\rule{8mm}{0mm}%
\begin{tikzpicture}[scale=0.3]
\input{cdb.tex}
\draw[thick](10,0) node[below]{$\beta$}--(5,8.660254) node[right]{$\alpha$}--(0,0) node[below]{$\gamma$};
\draw(0,0)--(10,0);
\end{tikzpicture}%
\hss}\end{center}\vskip -20pt
\caption{Projections of the approximations of $\Sec_{ij}$ to triangles.}
\label{fig:vertex-view}
\end{figure}

 Figure~\ref{fig:vertex-view} shows the dark and light gray regions
 when projected from the vertex $\bbe$/$\bga$/$\bde$ to the opposite triangle
 of the tetrahedron. By the non-Shannon inequalities \eqref{E:DFGsequence}
 discussed in the next section, the only almost entropic points
 on the edges $\bal\bbe$ and $\bal\bga$ are $\bbe$ and $\bga$. The
 analogous statement for the edge $\bal\bde$ is open.

\section{Ingleton score\label{S:4-atom}}

 As before, the ground set $N$ has four elements and $ij$ is
 a two-element subset of~$N$. The \emph{Ingleton score} of a polymatroidal
 rank function $h\neq0$ is defined as $\Iij(h)\triangleq\stv{h}{ij}/h(N)$
 \cite[Definition~3]{DFZ.nonSh}. The number
 \[
   \mathbb{I}^*\triangleq\inf\big\{\Iij(h)\colon 0\neq h\in\Hfent\big\}
 \]
 is referred to as the \emph{infimal} Ingleton score. This is likely
 the most interesting number related to the entropy region of four variables.
 By symmetry, $\mathbb{I}^*$ does not depend on~$ij$. This section presents
 an alternative way of minimization and a new upper bound on this number
 in Example~\ref{Ex:L}.

 First, the minimization is reduced to a three dimensional body.

\begin{theorem}\label{T:Istar}
  $\mathbb{I}^*=\min\nolimits_{\Sec_{ij}}\,\Iij\,$.
\end{theorem}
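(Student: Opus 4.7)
The proof establishes both inequalities $\min_{\Sec_{ij}} \Iij \ge \mathbb{I}^*$ and $\min_{\Sec_{ij}} \Iij \le \mathbb{I}^*$. For the first, $\Sec_{ij} \subseteq \clHfent$ and $\Iij$ is continuous on the locus $\{h\colon h(N) > 0\}$; approximating a minimizer $h^* \in \Sec_{ij}$ by entropic polymatroids, which is possible by density of $\Hfent$ in $\clHfent$, gives $\mathbb{I}^* \le \Iij(h^*) = \min_{\Sec_{ij}} \Iij$. For the second, using that $\mathbb{I}^* < 0$ because entropic Ingleton violators exist, it suffices to produce, from each $h \in \Hfent$ with $\stv{h}{ij} < 0$, an element $h'' \in \Sec_{ij}$ satisfying $\Iij(h'') \le \Iij(h)$.

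The construction runs along the chain
\[
   h \;\longmapsto\; \hti \;\longmapsto\; A_{i,j}B_{ij,k}\hti \;\longmapsto\; C_{ij}A_{i,j}B_{ij,k}\hti\,,
\]
followed by rescaling to $h(N) = 1$. The first step uses Section~\ref{S:dec}: the coefficients of $\stv{}{ij}$ at each element of $N$ sum to zero, so this expression is \emph{balanced}, whence $\stv{\hmod}{ij} = 0$ and $\stv{h}{ij} = \stv{\hti}{ij}$; since $h(N) \ge \hti(N) > 0$, dividing the fixed negative numerator by a larger denominator yields $\Iij(h) \ge \Iij(\hti)$, and Theorem~\ref{T:tight} places $\hti$ in $\Lf{ij}$. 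Theorem~\ref{T:face} next lands $A_{i,j}B_{ij,k}\hti$ in $\Ff{ij} \subseteq \clHfent$, and Lemma~\ref{L:AB} keeps $\stv{}{ij}$ fixed. The denominator is controlled by the evaluations $r_1^i(N) - r_1(N) = 0$ and $r_2^k(N) - r_3(N) = -1$, so $A_{i,j}$ leaves $h(N)$ alone while $B_{ij,k}$ lowers it by the nonnegative quantity $\trn{h}{kl|ij}$; with the numerator negative and fixed, $\Iij$ is non-increasing. Finally $C_{ij}$, averaging over the stabilizer $G_{ij}$ which fixes both $ij$ and $N$, preserves $\stv{h}{ij}$ and $h(N)$ and sends $\Ff{ij}$ into itself by convexity, so it leaves $\Iij$ unchanged and keeps the image in $\clHfent$. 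Rescaling by $h(N)$, legitimate by the homogeneity of $\Iij$, deposits the result in $\Sec_{ij}$.

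The main obstacle is the bookkeeping on the denominator $h(N)$ through the chain. The numerator $\stv{h}{ij}$ is preserved effortlessly by Lemma~\ref{L:AB} and $G_{ij}$-symmetry, so the whole argument hinges on $h(N)$ being non-increasing with the favourable sign, which ultimately rests on the numerical accident $r_2^k(N) - r_3(N) = -1$. A secondary point is that the denominator must remain strictly positive along the chain: for $\stv{h}{ij} < 0$ this follows from $h(ij) > 0$, since otherwise $h(i) = h(j) = 0$ by monotonicity and submodularity would force $\stv{h}{ij} = \trn{h}{k,l} \ge 0$, a contradiction; in particular $(B_{ij,k}\hti)(N) = \hti(ij) > 0$ for tight $\hti$.
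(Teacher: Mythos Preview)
Your proof is correct and follows essentially the same route as the paper: pass from $\Hfent$ to the compact cross-section of $\clHfent$ by density and continuity, then run the chain $h\mapsto\hti\mapsto A_{i,j}B_{ij,k}\hti\mapsto C_{ij}A_{i,j}B_{ij,k}\hti$ while tracking that the numerator $\stv{h}{ij}$ is preserved and the denominator $h(N)$ is non-increasing. The paper records the denominator drop more compactly as $A_{i,j}B_{ij,k}g(N)=g(ij)$ for tight $g$, and handles the degenerate case $g(ij)=0$ the same way you do, but there is no substantive difference.
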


\begin{proof}
 Since the score is constant along rays and $\mathbb{I}^*$ is negative
 \[
   \mathbb{I}^*=\min\big\{\Iij(h)\colon h(N)=1\,,\,\stv{h}{ij}\leq0\;\;\text{and}
        \;\;h\in\clHfent\big\}\,,
 \]
 minimizing over a compact. If $h\in\HIvij$ then $\Iij(h)\geq\Iij(\hti)$ for $\hti\neq0$,
 and $\Iij(h)=0$ for $\hti=0\neq h$. Hence,
 \begin{equation}\label{E:lij}
   \mathbb{I}^*=\min\big\{\Iij(h)\colon h(N)=1\;\;\text{and}
        \;\;h\in\Lf{ij}\big\}\,.
 \end{equation}
 Recall that $\Lf{ij}=\clHfentti\cap\HIvij$ is the cone of tight almost entropic
 rank functions $h$ with $\stv{h}{ij}\leq0$. (By Corollary~\ref{C:4ri.ent},
 the above minimization can be expressed by special entropy functions.)

 If $g\in\Lf{ij}$ then \eqref{E:ABg} and tightness of $g$ provide
 \[
    A_{i,j}B_{ij,k}g(N)=g(N)-\trn{g}{kl|ij}=g(ij)\,.
 \]
 By Lemma~\ref{L:AB}, $\Iij(g)\geq\Iij(A_{i,j}B_{ij,k}g)$ when $g(ij)>0$. If $g(ij)=0$
 then $\stv{g}{ij}=g(k)+g(l)-g(kl)$ which is possible only if
 $\Iij(g)$ vanishes. Hence, Theorem~\ref{T:face} implies that the
 minimization restricts to $\Ff{ij}$. The assertion follows by symmetrization.
\end{proof}

 The three dimensional body $\Sec_{ij}$ is enclosed in the tetrahedron $\bal\bbe\bga\bde$
 and $-4\Iij(h)$ is the weight $\bar{\alpha}_h$ of $h\in\Sec_{ij}$ at the vertex
 $\bal$ when $h$ is written as the unique convex combination of the vertices.
 Thus, points of $\Sec_{ij}$ with the heaviest weight at $\bal$ are the minimizers
 in Theorem~\ref{T:Istar}. It should be also mentioned that is not clear
 which part of $\Sec_{ij}$ is exhausted by the very entropic points.

 Lower bounds on $\mathbb{I}^*$ can be obtained by relaxing $\Lf{ij}$ in \eqref{E:lij}.
 The simplest relaxation is to $\Hfti\cap\HIvij$ because this cone has only one
 extreme ray allowing for negative scores, namely the one generated by~$\bar{r}_{ij}$.
 Therefore, the infimal score $\mathbb{I}^*$ is lower bounded by
 $\Iij(\bar{r}_{ij})=-\frac14$. With little more work, the bound
 $-\frac16$ can be obtained by Zhang-Yeung inequality. Better lower
 bounds are reported in~\cite{DFZ.nonSh}, based on further non-Shannon-type
 inequalities.

 Upper bounds on the infimal Ingleton score arise from entropic
 polymatroids that violate the Ingleton inequality. There are many examples
 at disposal \cite{St.countex,M.4var.II,M.4var.III, ZhY.ineq,HRShVe,FShHa,
 HassSha,MaoThillHass,BostonNan,Walsh-Weber}. The following one has attracted
 a special attention.

\begin{example}\label{Ex:4at}
 Let $\xi_i$ and $\xi_j$ be exchangeable and $0$-$1$ valued, and $\xi_i=1$
 with the probability $\frac12$. Let further $\xi_k=\min\{\xi_i,\xi_j\}$ and
 $\xi_l=\max\{\xi_i,\xi_j\}$, see \cite[Example~1]{M.4var.II} or \cite{Csir}.
 If $0\leq p\leq\frac12$ denotes the probability of $\xi_i\xi_j=00$ and $h_p$
 is the entropy function of $\xi_i\xi_j\xi_k\xi_l$ then
 \[
  \Iij(h_p)=\frac{\trn{h_p}{ij|\pmn}-\trn{h_p}{kl|\pmn}}{h_p(N)}
 \]
 using the identity $\stv{}{ij}=\trn{}{kl|i}+\trn{}{kl|j}+\trn{}{ij|\pmn}-\trn{}{kl|\pmn}$.
 Let $\kappa(u)=-u\,\ln u$, $u>0$, and $\kappa(0)=0$.
 The numerator is
 \[\begin{split}
   2\ln 2-2\kappa(p)-2\kappa(\tfrac12-p)
   -\Big[2\kappa(1-p)-\kappa(1-2p)\Big]\\
   =(2p+1)\ln2-2\kappa(p)-2\kappa(1-p)
   \end{split}
 \]
 and the denominator is $2\kappa(p)+2\kappa(\frac12-p)$. The function
 $p\mapsto\Iij(h_{p})$ is strictly convex and has a unique global
 minimizer $p^*$ in the interval $[0,\tfrac12]$. Approximately,
 $p^*\doteq0.350457$ and $\Iij(h_{p^*})\doteq-0.089373$.
\end{example}

 The guess that $\mathbb{I}^*$ be equal to $\Iij(h_{p^*})$
 goes back to \cite{Csir} but the formulation \cite[Conjecture 4.1]{Csir}
 had a wrong numerical value. The same surmise appeared later in~\cite{DFZ.nonSh}
 as \emph{Four-atom conjecture}, referring to the four possible values of
 $\xi_i\xi_j\xi_k\xi_l$. The minimization was considered also in \cite{MaoThillHass,BostonNan}
 that report no score below $\Iij(h_{p^*})$. However, the computer experiments
 discussed in~Section~\ref{S:symmetrized} found an entropic point that
 can be transformed to an almost entropic point witnessing
 failure of Four-atom conjecture.

\begin{example}\label{Ex:L}
 Let each of four variables in $\xi_i\xi_j\xi_k\xi_l$ take values
 in $\{0,1,2,3\}$ and $p,q,r,s,t$ be nonnegative such that $p+q+r+s+t=\tfrac18$.
 The table below lists $40$ different configurations of the random vector.
 Each configuration in any column is attained with the probability given
 by the label of that column. The remaining configurations have zero
 probabilities. The corresponding entropy function is denoted by $f$.
    \begin{table}[htb]
    \begin{tabular}{|c|c|c|c|c|}\hline
        $p$&$q$&$r$&$s$&$t$\\\hline
        \strut 0000 & 0210 & 0011 & 0010 & 0001 \\
        \strut 0101 & 0321 & 0120 & 0121 & 0100 \\
        \strut 1010 & 1100 & 1002 & 1000 & 1012 \\
        \strut 1212 & 1332 & 1230 & 1232 & 1210 \\
        \strut 2121 & 2001 & 2103 & 2101 & 2123 \\
        \strut 2323 & 2233 & 2331 & 2333 & 2321 \\
        \strut 3232 & 3012 & 3213 & 3212 & 3233 \\
        \strut 3333 & 3123 & 3322 & 3323 & 3332 \\\hline
    \end{tabular}
    \end{table}

 By inspection of the table, in each column any variable takes each
 value twice.  Hence, $f(i)$, $f(j)$, $f(k)$ and $f(l)$ are equal to $2\ln2$.
 In each column, $\xi_i\xi_l$ and $\xi_j\xi_k$ are in the configurations
 $00, 33, 01, 10, 12, 21, 23, 32$. Hence, $f(il)$ and $f(jk)$ are equal
 to $3\ln2$. In each column but the second/third one, $\xi_i\xi_j$ and
 $\xi_k\xi_l$ are in the configurations $00, 33, 01, 10, 12, 21, 23, 32$,
 otherwise in $11, 22, 02, 20, 13, 31, 03, 30$. Hence,
 \[\begin{split}
    f(ij)&=8\kappa(q)+8\kappa(p+r+s+t) \\  
    f(kl)&=8\kappa(r)+8\kappa(p+q+s+t)\,.  
 \end{split}\]
 In the first and fifth/forth column, $\xi_i\xi_k$ and $\xi_j\xi_l$ are in the configurations
 $00$, $11$, $22$, $33$, each one attained twice, otherwise in $01,10,02,20,13,31,23,32$. Hence,
 \[\begin{split}
    f(ik)&=4\kappa(2p+2t)+8\kappa(q+r+s)\\     
    f(jl)&=4\kappa(2p+2s)+8\kappa(q+r+t)\,.    
 \end{split}\]
 Analogous considerations provide
 \[\begin{split}
     f(ikl)&=8\kappa(p+t)+8\kappa(q+s)+8\kappa(r)\\ 
     f(jkl)&=8\kappa(p+s)+8\kappa(q+t)+8\kappa(r)\\ 
     f(ijk)&=8\kappa(p+t)+8\kappa(r+s)+8\kappa(q)\\ 
     f(ijl)&=8\kappa(p+s)+8\kappa(r+t)+8\kappa(q)\,.
 \end{split}\]
 Since the $40$ configurations of the table are all different
 \[
     f(ijkl)=8\kappa(p)+8\kappa(q)+8\kappa(r)+8\kappa(s)+8\kappa(t)\,.  
 \]
 The choice
 \[
    \text{$p=0.09524$, $q=0.02494$, $r=0.00160$ and $s=t=0.00161$,}
 \]
 where $r$ is close to $s=t$, gives $\Iij(f)\doteq-0.078277$.
 This is yet bigger than the value $-0.089373$ from Example~\ref{Ex:4at}.
 However, $\Iij(\fti)\doteq-0.0912597$, refuting Four-atom conjecture.
 Even better, if $g$ denotes $A_{i,j}B_{ij,k}\fti$ then $\stv{g}{ij}=\stv{f}{ij}$
 by Lemma~\ref{L:AB}, and
 \[\begin{split}
    g(N)=\fti(N)-\trn{\fti}{kl|ij}
        &=2\fti(N)+\fti(ij)-\fti(ijk)-\fti(ijl)\\
        &=f(ij)+f(ikl)+f(jkl)-2f(N)<\fti(N)
 \end{split}\]
 by \eqref{E:ABg}. Hence, the score $\Iij(g)$ is approximately $-0.09243$,
 currently the best upper bound on the infimal Ingleton score.
\end{example}

 Figure~\ref{fig:comp} features also two extreme points of the dark gray region.
 The circle depicts the point $C_{ij}A_{i,j}h_{p^*}$ where $h_{p^*}$ was described in
 Example~\ref{Ex:4at}. The bullet depicts $C_{ij}A_{i,j}B_{ij,k}\fti$
 where $f$ is the entropic point from Example~\ref{Ex:L}.

\smallskip
 Figure~\ref{fig:face-views} shows the intersections of the dark and
 light gray regions, approximating $\Sec_{ij}$, with the triangles $\bal\bbe\bga$,
 $\bal\bga\bde$ and $\bal\bde\bbe$, two more exceptional points of $\Sec_{ij}$,
 and the role of Zhang-Yeung inequality.

 The symmetrized Zhang-Yeung inequality
 \[
     2\stv{h}{ij}+\big[\trn{h}{ik|l}+\trn{h}{il|k}+\trn{h}{kl|i}\big]
     +\big[\trn{h}{jk|l}+\trn{h}{jl|k}+\trn{h}{kl|j}\big]\geq0\,,
 \]
 valid for $h\in\Hfent$, rewrites to $\bar{\beta}_h+\bar{\delta}_h\geq\tfrac12\bar{\alpha}_h$.
 The plane defined by the equality here is indicated in Figure~\ref{fig:face-views}
 by the three dashed segments.

\begin{figure}[h!tb]
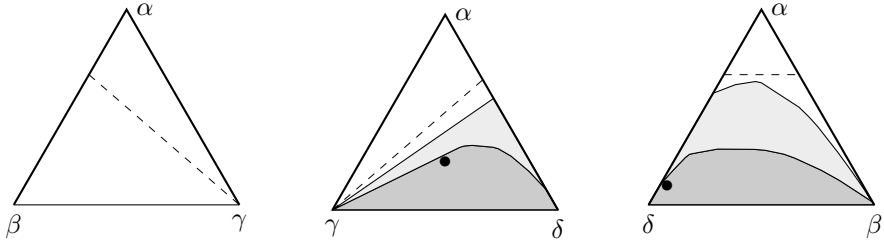

\begin{center}\hbox to\textwidth{\hss
\begin{tikzpicture}[scale=0.3]
\draw[thick](10,0) node[below]{$\gamma$}--(5,8.660254)node[right]{$\alpha$}--(0,0)node[below]{$\beta$};
\draw(0,0)--(10,0);
\draw[ultra thin,dashed](10/3,8.660254*2/3)--(10,0);
\end{tikzpicture}\rule{8mm}{0mm}%
\begin{tikzpicture}[scale=0.3]
\input{sidegd}
\draw[thick](10,0) node[below]{$\delta$}--(5,8.660254)node[right]{$\alpha$}--(0,0)node[below]{$\gamma$};
\draw(0,0)--(10,0);
\draw  (5.0000,2.1227) node{$\bullet$};
\draw[ultra thin,dashed](0,0)--(20/3,8.660254*2/3);
;
\end{tikzpicture}\rule{8mm}{0mm}%
\begin{tikzpicture}[scale=0.3]
\input{sidedb}
\draw[thick](10,0) node[below]{$\beta$}--(5,8.660254) node[right]{$\alpha$}--(0,0)node[below]{$\delta$};
\draw(0,0)--(10,0);
\draw (0.8244,0.8353) node{$\bullet$};
\draw[ultra thin,dashed](10/3,8.660254*2/3)--(20/3,8.660254*2/3);
\end{tikzpicture}%
\hss}\end{center}\vskip -20pt
\caption{Intersections of the approximations of $\Sec_{ij}$ with triangles.}\label{fig:face-views}
\end{figure}

 By \cite[Theorem 10]{DFZ.nonSh}, if $s\geq 0$ is integer then
 \[ \begin{split}
    (2^s-1)\,\stv{h}{ij} &+ \trn{h}{kl|i}
                    + s2^{s-1}\big[\trn{h}{ik|l}+\trn{h}{il|k} \big] \\
      &+ \big((s-2)2^{s-1}+1\big)\big[\trn{h}{jk|l}+\trn{h}{jl|k} \big]
        \geq 0  \,,\qquad h\in\Hfent\,.
    \end{split}
 \]
 This inequality and its instance with $i$ and $j$ interchanged sum to
 \begin{equation}\label{E:DFGsequence}
    \bar{\beta}_h +\big[(s-1)2^s+1\big]\bar{\delta}_h
    \geq \tfrac12(2^s-1)\bar{\alpha}_h\,,\qquad h\in\Hfent\,.
 \end{equation}
 Hence, the triangle $\bal\bbe\bga$ contains no almost
 entropic points except those on the edge $\bbe\bga$.

 The bullet inside the triangle $\bal\bga\bde$ depicts the entropy function
 $f_{\scriptscriptstyle 1/2}$ from Example~\ref{Ex:4at}, see also~\cite[Example~1]{M.4var.II}.
 The bullet inside the triangle $\bal\bbe\bde$ shows the almost entropic point
 $C_{ij}A_{i,j}\gti$ where $g$ is the entropy function discussed
 in Remark~\ref{R:sppoint}, see also~\cite[Example 2]{M.4var.II}.

\end{document}